\documentclass[12pt,a4paper]{article}

\usepackage[T1]{fontenc}

\usepackage{mathrsfs,dsfont}
\usepackage{bbm}
\usepackage{amssymb} 
\usepackage{mathtools}

\usepackage{geometry}
\geometry{top=25mm, bottom=30mm, left=27mm, right=23mm}

\usepackage{graphicx}   
\usepackage{enumerate}  
\usepackage{tabularx}   
\usepackage{booktabs}   
\usepackage[labelfont=bf,font=small,width=.95\textwidth]{caption} 

\usepackage[dvipsnames]{xcolor}
\definecolor{linkred}{HTML}{C70039}
\usepackage[colorlinks]{hyperref}
\hypersetup{
    linkcolor=linkred, 
    citecolor=RoyalBlue,
}


\usepackage{amsthm}
\newtheorem{thm}{Theorem}
\newtheorem{cor}[thm]{Corollary}
\newtheorem{lem}[thm]{Lemma}
\newtheorem{prop}[thm]{Proposition}
\newtheorem{defn}[thm]{Definition}
\newtheorem{rem}{Remark}

\newcommand{\diff}{\mathrm{d}}

\DeclareMathOperator{\rank}{rank}
\DeclareMathOperator{\Id}{Id}

\renewcommand{\P}{\mathbb{P}}
\newcommand{\E}{\mathbb{E}}
\newcommand{\R}{\mathbb{R}}

\newcommand{\icu}{\mathrm{ICU}}
\newcommand{\hosp}{\mathrm{hosp}}

\newcommand{\norm}[1]{\lVert #1 \rVert}
\newcommand{\abs}[1]{\lvert #1 \rvert}
\DeclarePairedDelimiter{\floor}{\lfloor}{\rfloor}
\DeclarePairedDelimiter{\Angle}{\langle}{\rangle}

\title{
From individual-based epidemic models to McKendrick-von Foerster PDEs:
A guide to modeling and inferring COVID-19 dynamics  
}

\usepackage{authblk}

\newlength{\affilskip}
\setlength{\affilskip}{5pt}

\author[1]{Félix Foutel-Rodier\footnote{First author.}}
\author[2,3]{François Blanquart}
\author[2]{Philibert Courau}
\author[2,4]{Peter~Czuppon}
\author[5]{Jean-Jil Duchamps}
\author[2]{Jasmine Gamblin}
\author[2,6]{\'Elise~Kerdoncuff}
\author[2,7]{Rob Kulathinal}
\author[2]{L\'eo R\'egnier}
\author[2]{Laura Vuduc}
\author[2,8]{Amaury Lambert\footnote{Co-last authors.}} 
\author[9]{Emmanuel Schertzer\textsuperscript{\textdagger}}

\affil[1]{
    Département de Mathématiques, Université du Québec à Montréal,\authorcr
    Montréal, QC, Canada\vspace{\affilskip}
}
\affil[2]{
    SMILE Group, Center for Interdisciplinary Research in Biology UMR 7241,\authorcr
    Coll\`ege de France, CNRS, INSERM U 1050, \authorcr
    PSL Research University, Paris, France \vspace{\affilskip}
}

\affil[3]{
    Infection, Antimicrobials, Modeling, Evolution UMR 1137, \authorcr
    Université de Paris, INSERM, Paris, France
    \vspace{\affilskip}
}
\affil[4]{
    Institute for Evolution and Biodiversity, University of Münster, 48149
    Münster, Germany
    \vspace{\affilskip}
}
\affil[5]{
    Laboratoire de mathématiques de Besan{\c c}on UMR 6623, \authorcr
    Université Bourgogne Franche-Comté, CNRS, F-25000 Besançon, France
    \vspace{\affilskip}
}
\affil[6]{
    Institut de Systématique, Biodiversité, \'Evolution UMR 7205, \authorcr 
    Muséum National d'Histoire Naturelle, CNRS, Paris, France
    \vspace{\affilskip}
}
\affil[7]{
    Department of Biology, Temple University, Philadelphia, PA, USA
    \vspace{\affilskip}
}
\affil[8]{
    Institut de Biologie de l'ENS, \'Ecole Normale Supérieure, \authorcr
    CNRS UMR 8197 INSERM U 1024, Paris, Franc
    \vspace{\affilskip}
}
\affil[9]{
    Faculty of Mathematics, University of Vienna, \authorcr
    Oskar-Morgenstern-Platz 1, 1090 Wien, Austria 
}
\date{}

\begin{document}

\maketitle

\vspace*{-1cm}

\begin{abstract}
We present a unifying, tractable approach for studying the spread
of viruses causing complex diseases requiring to be modeled using a
large number of types (e.g., infective stage, clinical state, risk
factor class). We show that recording each infected individual's
infection age, i.e., the time elapsed since infection, has three
benefits.

First, regardless of the number of types, the age distribution of the
population can be described by means of a first-order, one-dimensional
partial differential equation (PDE) known as the McKendrick-von Foerster
equation. The frequency of type $i$ is simply obtained by integrating the
probability of being in state $i$ at a given age against the age
distribution. This representation  induces a simple methodology based on
the additional assumption of Poisson sampling to infer and forecast the
epidemic. We illustrate this technique using French data from the
COVID-19 epidemic.

Second, our approach generalizes and simplifies standard compartmental
models using high-dimensional systems of ordinary differential equations
(ODEs) to account for disease complexity. We show that such models can
always be rewritten in our framework, thus, providing a low-dimensional
yet equivalent representation of these complex models. 

Third, beyond the simplicity of the approach, we show that our population
model naturally appears as a universal scaling limit of a large class of
fully stochastic individual-based epidemic models,  where the initial
condition of the PDE emerges as the limiting age structure of an
exponentially growing population starting from a single individual.
\end{abstract}

\section{Introduction}

\subsection{Challenges posed by complex diseases such as COVID-19}

The transmission of pathogens between species is a global concern
\cite{Krauss2003,Crawford}. As such zoonotic episodes are expected to
become increasingly common in humans, it is critical to develop analytic
tools that can quickly transform epidemiological observations into
informed public policy in order to mitigate and control epidemics. 

A novel coronavirus, SARS-CoV-2, has recently crossed the species barrier
into humans and, within months, has rapidly spread to all corners of our
planet \cite{Wu2020}. The sheer scale of this pandemic has overburdened
our medical infrastructure, caused fatalities estimated well into
millions, and shut down entire economies. Remarkably, the
rapid spread of COVID-19 and its consequences can be attributed to the
unique life cycle of a 30,000 base pair single-stranded virus. SARS-CoV-2
is an airborne pathogen transmitted by both symptomatic and asymptomatic
carriers in close proximity to non-infected individuals. Milder COVID-19
symptoms include a dry cough, fever, and/or shortness of breath while
more serious cases include respiratory failure and possible death. With
millions of infections and hundreds of thousands of documented deaths and
recoveries, the COVID-19 pandemic is providing a wealth of independent
estimates of important clinical characteristics that can help predict
health outcomes specific for a country or region.

It quickly became understood that accurate descriptions of the life cycle
of this disease needed to distinguish between several stages of the
disease, referred to as compartments, depending on whether an infected
individual is infectious or not, symptomatic or not, hospitalized,
\textit{etc}. However it remains unclear to what extent making precise
predictions of the dynamics of such a complex disease requires to have a
precise knowledge of clinical features such as incubation period,
generation time, and duration times between infection, symptom
establishment, hospitalization, recovery and death, to know how these
durations correlate and what are the exact probabilities of transition
between stages.

In this work, we consider a fully stochastic, generic epidemiological
model with an arbitrary number of compartments, that encompasses life
cycles of most complex diseases and that of COVID-19 in particular. We
show how structuring the infected population by its infection age, i.e.,
time elapsed since infection, allows us to decouple dependencies between
stages and to time. More specifically, when the population size is large
enough, the joint evolution of all compartment sizes can be described by
means of a linear, first-order partial differential equation (PDE) known
as the McKendrick-von Foerster equation describing the number $n(t,a)$ of
infecteds of (infection) age $a$ at time $t$. The boundary condition at
age 0 is driven by the infection rate from infecteds of age $a$,
\emph{averaged} over all possible courses of infection, and the number of
individuals of age $a$ in compartment $i$ at time $t$ is obtained by
thinning $n(t,a)$ by a factor $p(a,i)$ which is the probability of being
in compartment $i$ conditional on having age $a$, \emph{averaged} over
all possible courses of infection.

In the case of COVID-19, we display a simple procedure to infer these
parameters, some from the biological literature and most from time series
of numbers of severe cases, hospitalized cases, discharged patients and
deaths that can be applied easily to any regional or national dataset. We
also allow for time inhomogeneity in the infection rate to account for
temporary mitigation measures such as lockdowns or social distancing. We
apply this procedure to French COVID-19 data from March to May 2020 and
estimate various parameters of interest including the reproduction number
in different phases of the epidemic (before, during, and after lockdown)
and biological parameter values that we compare to empirical estimates.

\subsection{Decorated age-structured epidemic models} \label{SS:SIR}

The large population size limit of our stochastic model is a PDE
``decorated'' with compartments. This point of view extends the usual
sets of ODEs used in epidemiology, and allows us to represent in the same
framework a large class of deterministic epidemic models. Before
describing the stochastic model underlying such decorated PDEs, let us
illustrate this notion by recalling the well-known derivation of the
classic SIR set of ODEs from an age-structured model. 

Consider the solution $(n(t,a);\, t,a \ge 0)$ to the following partial
differential equation:
\begin{equation} \label{eq:Kermack_McK}
    \begin{aligned}
    \partial_t n + \partial_a n &= 0 \\
    \forall t \ge 0,\; n(t,0)   &= S(t) \int_0^\infty n(t,a) \tau(a) \diff a \\
    \forall a \ge 0,\; n(0,a)   &= x_0 g(a) \\
    \forall t \ge 0,\; S(t) &= 1 - \int_0^\infty n(t,a) \diff a.
    \end{aligned}
\end{equation}
where $0 \le x_0 \le 1$, and $g, \tau \ge 0$ fulfill
\[
    \int_0^\infty g(a) \diff a = 1,\quad \int_0^\infty \tau(a) \diff a <
    \infty.
\]
Equation~\eqref{eq:Kermack_McK} was first proposed to describe the dynamics
of an epidemic where infected individuals are structured by their
\emph{age of infection}, and is known as the Kermack-McKendrick model \cite{kermack1927}. 
Note that in the original work of \cite{kermack1927} the model is
formulated as the solution to a convolution equation rather than as a PDE, but
that the two formulations are equivalent, see Section~\ref{SS:weakSolution}.
In this context, the age of an individual refers to the time elapsed
since its infection, and not to its actual age. Then, $n(t,a)$ is the
density at time $t$ of all individuals with age (of infection) $a$, and
$S(t)$ the density of individuals still susceptible to the disease. The 
differential term describes the aging process: the age of an individual
increases linearly with time at rate one. The interpretation of the age
boundary condition of \eqref{eq:Kermack_McK} is that individuals with age
$a$ infect susceptible individuals at a rate $\tau(a)$ that only depends
on their age.

\begin{rem}
    It is important to note that $n(t,a)$ counts all individuals that
    have been infected at time $t-a$, and not only \emph{infective}
    individuals with age $a$. Thus, Eq.~\eqref{eq:Kermack_McK} lacks the
    usual recovery term. Moreover, $\tau(a)$ is not the average rate at
    which \emph{infective} individuals with age $a$ yield infections, but
    the average infection rate of any individual with age $a$. (The
    former is obtained from the latter by discounting all individuals
    with age $a$ that are not infectious anymore.)
\end{rem}

In order to recover the SIR model, suppose that $\tau$ is given by
\[
    \forall a \ge 0,\quad \tau(a) = \beta e^{-\gamma a},
\]
for some $\gamma, \beta > 0$. Further define
\[
    \forall t \ge 0,\quad I(t) = \int_0^\infty n(t,a) e^{-\gamma a} \diff a,
    \quad
    R(t) = \int_0^\infty n(t,a) (1-e^{-\gamma a}) \diff a,
\]
Then a simple calculation shows that $(S,I,R)$ solves the following
well-known system of ODEs:
\begin{equation} \label{eq:SIR}
    \begin{aligned}
    \dot{S} &= -\beta I S\\
    \dot{I} &= \beta I S - \gamma I\\
    \dot{R} &= -\gamma I.
    \end{aligned}
\end{equation}
The previous expressions have an interesting probabilistic
interpretation. Consider a Markov process $(X(a);\, a \ge 0)$ with two
states $I$ and $R$. Suppose that it starts from $I$ and jumps to $R$ at
rate $\gamma$. The process $(X(a);\, a \ge 0)$ can be interpreted as
describing the sequence of states (infective then recovered) visited by a
typical individual in the microscopic model underlying \eqref{eq:SIR}.
Then, clearly 
\[
    p(a,I) \coloneqq \P(X(a) = I) = e^{-\gamma a},\quad 
    p(a,R) \coloneqq \P(X(a) = R) = 1-e^{-\gamma a},
\]
so that
\begin{equation} \label{eq:SIRcomp}
    I(t) = \int_0^\infty n(t,a) p(a, I) \diff a, \quad
    R(t) = \int_0^\infty n(t,a) p(a, R) \diff a.
\end{equation}
Furthermore, suppose that a typical infected individual yields new
infections at constant rate $\beta$ while it is in state $I$. Then, the
mean number of new infections occurring in the time interval $[a, a+\diff
a]$ is
\[
    \beta e^{-\gamma a} \diff a = \tau(a) \diff a.
\]

The picture that emerges from this simple calculation is that, instead of
keeping track of the number of individuals in each compartment, one can
consider the age structure of the population, given by 
Eq.~\eqref{eq:Kermack_McK}. The dynamics of the age structure is uniquely
prescribed by the average number $\tau(a)$ of infections that an
individual yields at age $a$. The individual counts in each compartment
can then be recovered by integrating against the age structure the
one-dimensional marginals of a process $(X(a);\, a \ge 0)$ that describes
the sequence of compartments visited by a typical individual in the
population. We say that the PDE is ``decorated'' with compartments, as
the process $(X(a);\, a \ge 0)$ is used to recover the counts in the
compartments, and only influences the dynamics of the infection
which is described by the sole Eq.~\eqref{eq:Kermack_McK} through the 
average infection rate $\tau(a)$.

This viewpoint has several advantages compared to the usual ODE setting
of \eqref{eq:SIR}. First, we can make sense of \eqref{eq:SIRcomp} for any
process $(X(a);\, a \ge 0)$. If this process is not Markovian, the number
of individuals in each compartment no longer solves a system of ODEs
similar to \eqref{eq:SIR}. Hence, this approach allows to go beyond the
usual ODE framework. This generalization is of great modeling interest
since a hypothesis underlying sets of ODEs is that the sojourn times in
each compartment and the time between successive infections are
exponentially distributed. In particular they cannot account for sojourn
time distributions that are peaked around a value, which have been
reported for instance for COVID-19 \cite{Linton2020,
verity_estimates_2020}.

Second, regardless of the number of compartments, the age structure of
the population is described by the same one-dimensional PDE. This is
particularly valuable when models have a large number of compartments, as
in the context of COVID-19 \cite{salje_estimating_2020,
evgeniou_epidemic_2020, DiDomenico2020, djidjou_optimal_2020}, as it
avoids the use of high-dimensional systems of ODEs that are cumbersome to study
mathematically. However, this requires to work with the PDE
\eqref{eq:Kermack_McK} rather than with ODEs, resulting in a mild
computational cost.

Third, Eq.~\eqref{eq:Kermack_McK} only involves the mean number of
infections $\tau(a)$ induced by individuals at age $a$ \emph{averaged
over all compartments}. In particular, it is unnecessary to assess to
which compartments individuals belong when they yield new infections.
This is in contrast with the usual ODE framework, where for each
compartment an infection rate needs to be prescribed. As we will see,
$\tau(a)$ relates to well-known epidemiological quantities that can be
assessed directly from the data.

The main contribution of our work is to show that such decorated
age-structured models arise naturally as the law of large numbers limit
of a wide class of general stochastic epidemic models that we now
introduce.

\subsection{Generic stochastic model assumptions}
\label{SS:introModel}

We consider a population model in which individuals are either
susceptible, if they have not yet met the disease, or infected. Our
definition of infected is broader than usual: an individual is infected
if it has been infected in the past. In particular, individuals that
have recovered or died from the disease are still infected, even if they
are not infective anymore. At any point in time, an infected individual
is in one of several states, that will also be referred to as
compartments, types, classes, or stages. The set of all such states is
denoted by $\mathcal{S}$ and is assumed to be finite. Depending on the
disease complexity, the number of stages can vary. In the SARS-CoV-2
example, typical stages are asymptomatic, mild case, severe case,
hospitalized, intensive care unit, recovered, and dead (see
Figure~\ref{F:admissionModel}).

We assume that upon infection a susceptible individual immediately
changes state and never becomes susceptible anymore (ruling out multiple
infections, in particular), and that it will eventually end in one of two
states: recovered or dead. The sequence of states visited by an
individual $x$ is then encoded by a stochastic process $X_x \coloneqq
(X_x(a);\, a \ge 0)$ valued in $\mathcal{S}$, where the random variable
$X_x(a)$ is the state of $x$ at age of infection $a$. We call $(X_x(a);\,
a \ge 0)$ the \emph{life-cycle process}.

Each individual is further endowed with a random point process
$\mathcal{P}_x$ on $[0, \infty)$, called the \emph{infection point
    process}. Each atom of $\mathcal{P}_x$ gives the age at which
$x$ makes an infectious contact with another individual in the
population, and we assume that all atoms are distinct so that
secondary infections cannot occur simultaneously. (By an infectious
contact, we mean a contact that would lead to a new infection if the
target individual is susceptible to the disease. The pair
$(\mathcal{P}_x, X_x)$ characterizes the course of the infection of
individual $x$. We assume that these pairs, for all individuals $x$, are
i.i.d.\ copies of the same pair $(\mathcal{P}, X)$ that describes the
infection of a typical individual in the population.

In order to make the mathematical treatment of the model easier, we make
the simplifying assumption that the number of susceptible individuals is
large compared to the number of infected individuals, so that each
infectious contact leads to a new infection. (We neglect the saturation
in the number of infections due to the finiteness of the population.)
The epidemic is then described by a branching process known as a
\emph{Crump-Mode-Jagers process} \cite{jagers1975branching, Taib1992}: an
individual $x$ infected at time $\sigma_x$ will produce $N_x$ new
secondary infections at times $\sigma_x + A_1, \dots, \sigma_x +
A_{N_x}$, where $(A_1, \dots, A_{N_x})$ are the atoms of $\mathcal{P}_x$,
that is ,
\[
    \mathcal{P}_x = \sum_{i = 1}^{N_x} \delta_{A_i}.
\]
(This branching hypothesis is relaxed in a recent work by some of
the authors \cite{duchamps2021general}.)

Lastly, we superimpose time heterogeneity to this process by means of a
\emph{contact rate} $(c(t);\, t \ge 0)$ valued in $[0, 1]$
thinning the infection process. More precisely, if $t$ is a potential
time of infection for individual $x$, we ignore the infection with
probability $1-c(t)$. This contact rate can model the effect of
vaccination, or density-dependence (i.e., relaxing the branching
assumption due to an excess of removed or of deceased individuals), or of
governmental mitigation measures (i.e., social distancing, lockdown).

The infection process is more formally constructed in
Section~\ref{SS:assumption}.

\begin{rem}
    As already discussed in the previous section, in the SIR example
    $\mathcal{S} = \{I, R\}$, and $(X(a);\, a \ge 0)$ is a Markov process
    started from $I$ that jumps to $R$ at rate $\gamma$. Moreover, the
    infection point process is given by
    \[
        \mathcal{P} = \sum_{A_i \in P} \delta_{A_i} \mathbbm{1}_{\{X(A_i) = I\}}
    \]
    where $P$ is a homogeneous Poisson point process on $[0, \infty)$
    with intensity $\beta$.
\end{rem}

\begin{rem}
    We emphasize that the \emph{pairs} $(\mathcal{P}_x, X_x)$ are assumed
    to be independent, but \emph{not} the variables $\mathcal{P}_x$ and $X_x$.
    In the simple SIR example they are not independent since there can be
    no atoms of $\mathcal{P}_x$ after the recovery time. In the same
    spirit, one could assume that the infection potential of a given
    individual is reduced once in the hospital and that individuals with
    many atoms in their infection process $\mathcal{P}_x$ (high
    infectiosity) are identified and isolated.
\end{rem}

\subsection{Statement of the main results} \label{SS:mainResults}

The stochastic epidemic models we consider here are fairly general and
can exhibit quite complex dependencies (i) between states and time, due
to the lack of any Markov-type assumption, (ii) between states, due to
possibly hidden structuring variables impacting the life cycle, (iii)
between state and infection rate, and (iv) between past and future
infection events. The main result of this work is that despite this
apparent complexity, most of this complexity vanishes when the size of
the population is large. More specifically, we show that in the limit of
large populations (obtained by starting from a large initial population
or as a consequence of natural exponential growth), the population of
infected individuals structured by age (of the infection) can be described
by means of a one-dimensional PDE, and that the counts in each compartment  
are recovered by decorating this PDE with the life-cycle process. The
limiting expression only depends on:
\begin{enumerate}
    \item The average infection rate 
    \[
        \tau(\diff a) \coloneqq \mathbb{E}(\mathcal{P}(\diff a)),
    \]
    formally defined as the intensity measure of $\mathcal{P}$.
    We make the simplifying assumption that $\tau$ has a density 
    w.r.t.\ the Lebesgue measure and, with a slight abuse of notation,  
    we still denote it by $\tau(a)$.
    \item The one-dimensional marginals of the life-cycle process 
    \[
        p(a,i) \coloneqq \mathbb{P}(X(a)=i). 
    \]
\end{enumerate}
We prove two main theorems that are two laws of large numbers for the
age and compartment structure in the population: one started from a large
number of individuals, and the other from a single individual.

\paragraph{Large initial population.} Let us start the population with a
large number $N$ of infected individuals at time $t=0$, with i.i.d.\
initial infection ages with law $g$. (See Section~\ref{SS:assumption}
for a formal definition of the initial age.) Define the empirical measure
of ages and compartments at time $t$ as 
\begin{align}\label{def:empirical}
    \mu^N_t(\diff a \times \{i\})  \coloneqq   
    \sum_{\sigma_x < t} \delta_{(t-\sigma_x, X_x(t-\sigma_x))}(\diff a\times \{i\}),
\end{align}
where $\sigma_x$ denotes the infection time of $x$, and the sum is taken
over all individuals $x$ infected before time $t$. (This includes
the initial individuals infected before time $0$.) The measure $\mu_t^N$
is a random point measure that encodes the ages and compartments of all
individuals that have been infected before time $t$.

Let us also introduce $n^N_i(t)$, the number of individuals in
compartment $i$ at time $t$, defined as 
\[
    n^N_i(t) \coloneqq \sum_{\sigma_x < t} \mathbbm{1}_{\{ X_x(t-\sigma_x)=i\} }
    = \mu_t^N\big([0, \infty) \times \{i\}\big).
\]

\begin{thm}[$N$ individuals] \label{thm:largePop}
    Start the population with $N$ individuals with i.i.d.\ initial ages
    distributed according to $g$. Then, for any $t > 0$, the following
    convergence holds in the weak topology
    \[
        \frac{1}{N} \mu^N_t(\diff a \times \{i\}) \underset{N \to \infty}{\longrightarrow} 
        n(t,a) p(a,i) \diff a \quad \text{a.s.}
    \]
    where $(n(t,a);\, t,a \ge 0)$ is the solution to
    \begin{equation} \label{eq:McKVF}
        \begin{aligned}
        \partial_t n + \partial_a n &= 0 \\
        \forall t \ge 0,\; n(t,0) &= c(t) \int_0^\infty n(t,a) \tau(a) \diff a \\
        \forall a \ge 0,\; n(0,a) &= g(a).
        \end{aligned}
    \end{equation}
    As a consequence, for any $t > 0$,
    \begin{equation} \label{eq:compartmentConv}
        \frac{1}{N} n^N_i(t) \underset{N \to \infty}{\longrightarrow}
        \int_0^\infty n(t,a) p(a,i) \diff a \quad\text{a.s.}
    \end{equation}
\end{thm}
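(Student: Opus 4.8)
The plan is to reduce this almost-sure statement to a first-moment computation, exploiting that the $N$ seeds evolve independently. Since the initial individuals carry i.i.d.\ initial ages with law $g$ and independent families of marks $(\mathcal{P}_x, X_x)$, and all of them are exposed to the \emph{same} deterministic contact rate $(c(t);\, t\ge 0)$, the empirical measure decomposes as $\mu^N_t = \sum_{k=1}^N \mu^{(k)}_t$, where $\mu^{(k)}_t$ collects the $k$-th seed together with all its descendants, i.e.\ a single-seed epidemic. The measures $\mu^{(1)}_t, \dots, \mu^{(N)}_t$ are i.i.d. For a fixed bounded continuous $f$ and state $i$, the variables $\int f(a)\, \mu^{(k)}_t(\diff a \times \{i\})$ are then i.i.d.\ with finite mean (the expected number of individuals infected before $t$ is finite; see below), so the strong law of large numbers yields $\frac1N \int f\, \mu^N_t(\diff a\times\{i\}) \to \E[\int f\, \mu^{(1)}_t(\diff a\times\{i\})]$ almost surely. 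Running this over a countable convergence-determining family of test functions and over the finitely many states $i$, and intersecting the corresponding null sets, upgrades the convergence to a.s.\ weak convergence. It thus remains to identify $\E[\mu^{(1)}_t(\diff a \times \{i\})] = n(t,a)\, p(a,i)\, \diff a$.

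First I would decouple the life cycle from the genealogy. Enumerating potential infections by Ulam--Harris labels, each potential individual $x$ carries an independent mark $(\mathcal{P}_x, X_x)$. The event that $x$ is effectively infected and its infection time $\sigma_x$ are measurable with respect to the marks of the \emph{strict} ancestors of $x$ and the independent thinning variables only; they do not depend on $x$'s own pair $(\mathcal{P}_x, X_x)$. Consequently $(\mathbbm{1}_{\{x \text{ infected}\}}, \sigma_x)$ is independent of $X_x$, and conditionally on $\sigma_x$ the state $X_x(t-\sigma_x)$ has law $p(t-\sigma_x, \cdot)$. Taking expectations and exchanging the sum with the expectation gives, for each $i$,
\[
    \E\big[\mu^{(1)}_t(\diff a \times \{i\})\big] = p(a,i)\, M_t(\diff a), \qquad M_t(\diff a) \coloneqq \E\big[\mu^{(1)}_t(\diff a \times \mathcal{S})\big],
\]
so the problem reduces to showing that the expected age structure satisfies $M_t(\diff a) = n(t,a)\, \diff a$.

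To identify $M_t$, I would write $\rho(t)$ for the expected instantaneous rate of new infections. An individual of age $a$ alive at time $t$ emits retained infections at rate $c(t)\tau(a)$, so a Campbell (first-moment) computation gives $\rho(t) = c(t) \int_0^\infty \tau(a)\, M_t(\diff a)$. Splitting the individuals present at time $t$ into the seed (age $a > t$, with $a - t$ distributed as $g$) and those infected during $(0,t)$ (age $a = t - \sigma < t$) yields $M_t(\diff a) = \big(g(a-t)\mathbbm{1}_{\{a>t\}} + \rho(t-a)\mathbbm{1}_{\{a<t\}}\big)\diff a$. These two identities are exactly the characteristic representation and the boundary condition of \eqref{eq:McKVF}; inserting the former into the latter produces a Volterra renewal equation for $\rho$, whose solution is unique and locally bounded by a Gronwall argument using $c \le 1$ and $\int_0^\infty \tau(a)\diff a < \infty$. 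This simultaneously supplies the finite-mean bound used above and shows $M_t(\diff a) = n(t,a)\diff a$ with $n$ the (mild) solution of \eqref{eq:McKVF}, completing the identification of the limit. The consequence \eqref{eq:compartmentConv} then follows by taking $f \equiv 1$; the passage to this non-compactly-supported test function is justified by the tail estimate $\int_T^\infty n(t,a)\diff a = \int_{T-t}^\infty g(b)\diff b \to 0$.

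The hard part is the decoupling step: rigorously justifying, within an explicit Ulam--Harris construction of the thinned Crump--Mode--Jagers process, both the independence of $X_x$ from $(\mathbbm{1}_{\{x\text{ infected}\}}, \sigma_x)$ and the exchange of the expectation with the a priori infinite sum over potential individuals. The latter rests on the uniform first-moment control furnished by the renewal equation of the previous paragraph; once this is in place, the remainder is bookkeeping of the renewal structure and a routine upgrade from convergence of expectations to almost-sure weak convergence.
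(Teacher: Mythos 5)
Your proposal follows essentially the same route as the paper: decompose $\mu^N_t$ into $N$ i.i.d.\ single-seed contributions, apply the strong law of large numbers testwise, and identify the expectation $\E[\mu^{(1)}_t(\diff a\times\{i\})]$ as $n(t,a)p(a,i)\,\diff a$ via the independence of $X_x$ from $\sigma_x$ together with a renewal equation for the expected infection rate --- this identification is precisely the paper's forward Feynman--Kac formula (Corollary~\ref{cor:forwardFK}, resting on Lemma~\ref{lem:mcvonf}), where the step you correctly flag as the hard part (exchanging expectation with the sum while the thinning probability $c(\sigma_x+A_i)$ couples the retained atoms to the birth time) is carried out by a discretization of the birth times and the Bounded Convergence Theorem. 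The proposal is correct and matches the paper's argument in all essentials.
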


The limiting age structure of the population is thus described by Eq.\
\eqref{eq:McKVF}, which is a linear version of Eq.\
\eqref{eq:Kermack_McK}, known as the McKendrick-von~Foerster equation.
Note that it also has an additional $c(t)$ term accounting for the
reduced contact rate, resulting in a time heterogeneity. As in
Section~\ref{SS:SIR}, the number of individuals in each compartment is
recovered by decorating the PDE with the one-dimensional marginals of the
life-cycle process.

\paragraph{After lockdown onset.}
Our second result displays a similar, but more subtle, convergence in the
case when the process is supercritical, where natural growth leads by
itself to large population sizes. We say that the process is
supercritical if 
\[
    \int_0^\infty \tau(a) \diff a > 1,
\]
in which case there exists $\alpha > 0$ such that 
\[
    \int_0^\infty e^{-\alpha a} \tau(a) \diff a = 1.
\]
(The parameter $\alpha$ is the \emph{Malthusian parameter} of the CMJ
process when $c \equiv 1$.) Let $Z(t)$ denote the total population size
at time $t$ and assume that $Z(0)=1$, i.e., we start from \emph{a single
individual}. Suppose that $(t_K;\, K \ge 0)$ is a sequence of stopping
times (with respect to the natural filtration of the process, see
Section~\ref{S:proofs}) such that $t_K \to \infty$ on the non-extinction
event. By a slight abuse of notation, denote by $\mu^K_t$ the empirical
measure of ages and types as in \eqref{def:empirical}, but under the
assumption that the contact rate at time $t$ is equal to
$c(t-t_K)$ where $c$ is equal to 1 for negative arguments. We are
motivated by modeling a situation where the infection is separated into
two distinct phases: 
\begin{enumerate}
   \item The epidemic develops until a certain random time $t_K$. For
       instance, $t_K$ could be the time at which the number of recorded
       deaths exceeds a large threshold $K$. We assume no suppression
       before $t_K$.
   \item We let the contact rate vary after time $t_K$ according to the
       function $(c(t-t_K);\, t \ge 0)$, e.g., due to mitigation measures
       and/or behavioral changes (i.e., lockdown phase).
\end{enumerate}
In this setting, we can derive the following version of the law of large
numbers for ages and compartments.

\begin{thm}[One individual] \label{thm:singlePop}
    Suppose that the process is supercritical and that the population is
    started from one individual. Conditional on non-extinction:
    \begin{enumerate}
        \item There exists a r.v.\ $W_\infty$ such that $W_\infty > 0$
            a.s.\ and 
            \[
                \sum_{\sigma_x < t_K} e^{-\alpha t_K}
                \underset{K \to \infty}{\longrightarrow} W_\infty
                \quad \text{in probability.}
            \]
        \item For any $t > 0$, we have 
            \[
                e^{-\alpha t_K} \mu^K_{t_K+t}(\diff a \times \{i\}) 
                \underset{K \to \infty}{\longrightarrow}
                W_\infty n(t,a) p(a,i) \diff a
            \]
            in probability for the weak topology, where $(n(t,a);\, t,a \ge 0)$
            is the solution to \eqref{eq:McKVF} with initial condition
            $g(a) = \alpha e^{-\alpha a}$.
    \end{enumerate}
\end{thm}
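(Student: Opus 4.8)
The plan is to prove Theorem~\ref{thm:singlePop} by reducing it to Theorem~\ref{thm:largePop} through conditioning on the state of the population at the random time $t_K$. The central idea is that the epidemic is genuinely in two phases: before $t_K$ it grows freely as a supercritical Crump-Mode-Jagers process with $c \equiv 1$, and after $t_K$ the dynamics are governed by the time-shifted contact rate. The key to the argument is a renewal/restart observation: conditional on the configuration of ages present at time $t_K$, the population evolving afterwards is exactly an $N$-individual system in the sense of Theorem~\ref{thm:largePop}, with $N$ equal to the number of individuals alive (infected) before $t_K$ and with initial ages given by the empirical age distribution at $t_K$. So the entire theorem hinges on two facts about the free-growth phase: that the population size grows like $e^{\alpha t_K}$ up to the random limit $W_\infty$, and that the empirical age distribution at $t_K$ converges to the stable age profile $\alpha e^{-\alpha a}$.

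First I would establish part (1), the convergence of the normalized population size. For a fixed deterministic time $t$, the classical CMJ theory (the Nerman/Jagers counting theorems, building on the Malthusian parameter $\alpha$) gives that $e^{-\alpha t} Z(t) \to W_\infty$ almost surely on non-extinction, with $W_\infty > 0$ a.s.\ there. The subtlety is that $t_K$ is a random stopping time rather than a deterministic time, so I cannot simply plug it in. I would handle this by using that $t_K \to \infty$ on non-extinction together with a regularity (near-continuity) estimate on the martingale $e^{-\alpha t} Z(t)$: the jumps of $Z$ become negligible relative to its size as $t \to \infty$, so evaluating the limit along the random sequence $t_K$ still yields $W_\infty$, now in probability rather than almost surely. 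The statement in the theorem is indeed only in probability, which is consistent with this loss of control at random times.

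Next I would establish the stable-age-distribution convergence: the empirical measure of ages at time $t_K$, once normalized by $e^{-\alpha t_K}$, converges to $W_\infty\, \alpha e^{-\alpha a}\,\diff a$. Again this is standard CMJ asymptotics at deterministic times (the stable age distribution of a Malthusian branching population has density proportional to $e^{-\alpha a}$ on the characteristic time-scale determined by $\tau$), and the passage to the random time $t_K$ uses the same near-continuity argument. This is the step that explains why the initial condition in \eqref{eq:McKVF} is precisely $g(a) = \alpha e^{-\alpha a}$: the freely growing population has self-organized into this profile by the time mitigation begins.

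The final and most delicate step is to splice the two phases together and invoke Theorem~\ref{thm:largePop}. Conditionally on the $\sigma$-field $\mathcal{F}_{t_K}$, the sub-population present at $t_K$ restarts as an independent copy of the generic model with contact rate $c(\cdot)$ (now starting from time $0$), initial population size $N_K \coloneqq \sum_{\sigma_x < t_K} 1$, and i.i.d.-like initial ages drawn from the empirical age distribution at $t_K$. I expect the main obstacle to lie exactly here, in justifying this conditional restart rigorously: the number of restarting individuals $N_K$ is itself random and depends on $\mathcal{F}_{t_K}$, and the initial ages are not literally i.i.d.\ from a fixed law $g$ but are the (random) empirical ages at $t_K$, so Theorem~\ref{thm:largePop} does not apply verbatim. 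I would resolve this by a conditioning argument: on the event that $N_K$ is large and the empirical age distribution is close (in the weak topology) to $\alpha e^{-\alpha a}$, a stability/continuity version of Theorem~\ref{thm:largePop}---establishing that the law-of-large-numbers limit depends continuously on the initial age profile---gives
\[
    \frac{1}{N_K}\, \mu^K_{t_K+t}(\diff a \times \{i\}) \approx n(t,a)\,p(a,i)\,\diff a,
\]
where $n$ solves \eqref{eq:McKVF} with $g(a) = \alpha e^{-\alpha a}$. Multiplying through by $e^{-\alpha t_K} N_K \to W_\infty$ from parts (1) and the age convergence, and controlling the errors in probability, yields the claimed limit $W_\infty\, n(t,a)\, p(a,i)\,\diff a$. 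The technical heart is thus a uniform-in-initial-condition strengthening of Theorem~\ref{thm:largePop} combined with careful propagation of the ``in probability'' errors through the random normalization.
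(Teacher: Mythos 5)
Your two-phase ``restart at $t_K$'' strategy is a genuinely different route from the paper's, but as outlined it has a gap at exactly the splicing step, and the fix you propose (a continuity-in-the-initial-age-profile version of Theorem~\ref{thm:largePop}) does not close it. The problem is not only that $N_K$ is random and the ages at $t_K$ are not i.i.d.; it is that the population alive at $t_K$ is \emph{not} an instance of the $N$-individual model at all. In Theorem~\ref{thm:largePop} each initial individual of age $a$ carries a pair drawn from the unconditional shifted law $\tilde{\mathscr{L}}_g$, i.e., its future infections and life-cycle are the unconditional restriction of $(\mathcal{P},X)$ to ages beyond $a$. An individual $x$ alive at $t_K$, by contrast, has an observed past (the atoms of $\mathcal{P}_x$ and the trajectory of $X_x$ up to age $t_K-\sigma_x$), and conditional on $\mathcal{F}_{t_K}$ its future must be drawn from the \emph{conditional} law given that past. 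Since the model explicitly allows dependence between past and future infection events and a non-Markovian life cycle, these two laws differ individual by individual. Matching the empirical age profile to $\alpha e^{-\alpha a}$ therefore does not determine the post-$t_K$ dynamics: what has to be propagated out of the free-growth phase is the empirical distribution of entire pairs (age, history-up-to-current-age), and showing that this converges to ``stable age profile $\times$ unconditional history'' is itself a law of large numbers for random characteristics --- i.e., the hard part of the theorem, which your outline implicitly assumes.

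This is precisely what the paper's proof supplies, and why it avoids the splicing altogether: it defines a random characteristic $\chi^{(t,f)}(a)$ that packages the \emph{whole} post-$t_K$ contribution of an individual of age $a$ (the individual itself at age $a+t$ plus the thinned progenies of its children born in the age window $[a,a+t]$), observes that $Z^{\chi^{(t,f)}}(t_K)=\langle\mu^K_{t_K+t},f\rangle$ in distribution, and then invokes the Jagers--Nerman theorem, whose limit $W_\infty\int_0^\infty \alpha e^{-\alpha a}\,\E(\chi^{(t,f)}(a))\,\diff a$ involves the \emph{unconditional} expectation of the characteristic and is identified with $\langle\bar\mu_t,f\rangle$ via the forward Feynman--Kac formula. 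The averaging over pasts that your approach must justify by hand is exactly what that theorem performs. Your part (1) and the stable-age-distribution step are fine in spirit (they are special cases of the same theorem with simpler characteristics), and the issue of evaluating limits that a priori hold only in probability along the random times $t_K$ is one the paper also treats lightly; but to make your route rigorous you would end up proving a random-characteristics convergence at $t_K$ anyway, at which point the paper's single-step argument is the shorter path.
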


This result states that, when the large population size is obtained by
natural population growth, the population has an exponentially
distributed initial age profile with a random size $W_\infty$ determined
by the early infection events. Moreover, the parameter of the exponential
distribution corresponds to the exponential growth rate of the epidemic
prior to the enforcement of control measures. This result can prove
useful in applications, as the exponential growth rate can be readily
estimated from incidence data, whereas the age structure of the
population can hardly be directly assessed. It is a quite generic
phenomenon that the macroscopic behavior of population models started
from a few individuals is described by a deterministic system, with a random
initial condition resulting from the stochasticity of the initial
population growth \cite{barbour16,baker18}.

\paragraph{Summary.} The macroscopic behavior of the epidemic is
characterized by the sole intensity measure $\tau$ and dictates an
explicit age structure of the population. The class structure is deduced
by integrating the life-cycle process against the limiting age profile.
This suggests an alternative point of view on epidemic models, as
age-structured models decorated with classes.

In order to validate our approach, we use those deterministic
approximations to infer epidemiological parameters (reproduction number
before and during lockdown) from recent empirical observations, and show
that our findings are in accordance with the current literature. 

\subsection{Inference on the French COVID-19 epidemic}

We have illustrated the practical interest of our approach by carrying
out parameter inference on data from the early French COVID-19 epidemic.
We focus on two important inference aspects of this epidemic:
providing estimates for key epidemiological quantities, such as the
reproduction number that allows to assess the impact of control measures;
and predicting the number of individuals in ICU and hospital to monitor
the pressure on the healthcare system. 

In our framework, the first task only requires a simple and parsimonious
model that can be adjusted on incidence data, whereas fitting the number
of individuals in ICU requires a more complex model that better accounts
for the population heterogeneity.

\paragraph{The early COVID-19 epidemic in France.}
After a rapid increase in the number of detected cases and deaths, the
French government issued a first nation-wide lockdown from March 17 2020
to May 11 2020. From March 18 2020, it has provided publicly available
daily reports of the number of ICU and hospital admissions, hospital
deaths, as well as the number of occupied ICU and hospital beds, and
discharged individuals. The daily number of detected cases was also
reported, but was considered as unreliable during this period due to the
high variation in the number of tests performed. No additional control
measure was enforced during this period.

\paragraph{Estimating epidemiological parameters from incidence data.} In
order to estimate the impact of lockdown we consider a
parsimonious model that requires to estimate few parameters. It is
illustrated in Figure~\ref{F:admissionModel}, and we refer to it as the
\emph{admission model}. Upon infection, individuals either develop a mild
form of COVID-19 from which they will recover, or a more severe form that
will eventually lead to a hospital admission. Then, hospitalized
individuals either recover and are discharged after some amount of time,
or are moved to ICU. Finally, individuals in ICU either die or recover. A
more detailed description of the model and its parameters is given in
Section~\ref{SS:admission}.

\begin{figure}
    \centering
    \includegraphics[width=.95\textwidth]{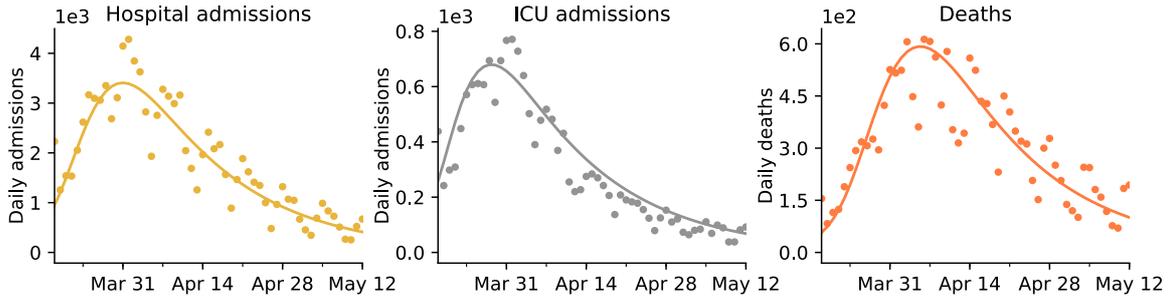}
    \caption{Best fit of the admission model. Solid lines correspond to
    the number of hospital admissions, ICU admissions and deaths
    predicted by the admission model. The dots are the corresponding
    observed values. The dispersion of the observations is mainly due to
    unreported data during the weekends, that are only reported at the
    start of the following week.}
    \label{F:admissionFit}
\end{figure}

We have fitted this model to the following three time series: daily
number of admissions in hospital and ICU, and daily deaths. Fitting such
``incidence'' time series only requires to estimate the entrance time in
each compartment, and not the corresponding sojourn times. The best
fitting model is represented in Figure~\ref{F:admissionFit}, and the
inference procedure is described in Section~\ref{S:inference}. We see
that our simple model reproduces quite well the shapes of the three
incidence time series. The estimated reproduction number after lockdown
is $R_{\mathrm{post}} = 0.745$, and that before lockdown is
$R_\mathrm{pre} = 3.25$. Thus we estimate that the lockdown yielded
a reduction of the reproduction number by a factor $4.36$. Moreover, the
estimated number of infections having occurred before March 18 2020 is 
$W = 9.85 \times 10^5$. All these estimates are in line with that of
other studies on the same dataset \cite{salje_estimating_2020,
sofonea_epidemiological_2020}.

\paragraph{Fitting prevalence data.} Our second objective is to fit three
additional time series: the number of occupied hospital and ICU beds, and
the number of discharged individuals. Using the simple admission model
only yields a poor fit of these new times series, see
Figure~\ref{F:poorFit}. We have identified two main causes for this
discrepancy. First, we have made the simplifying assumption that
individuals are always admitted to ICU prior to death. However, it has
been reported that a large fraction of deaths do not involve a
preliminary ICU admission \cite{lefranck21}, and our assumption leads to
a fraction of deaths among ICU patients much higher than that previously
reported. Second, there are many heterogeneities in the population, such
as the (actual) age, that are known to play an important role in the severity
of the symptoms of COVID-19 and that are not accounted for.

Thus, we used a more detailed model to reproduce all six time series,
which is illustrated in Figure~\ref{F:occupancyModel} and referred to as
the \emph{occupancy model}. Again, a detailed description of the model
and its parameters is available in Section~\ref{SS:occupancy}. The main two differences
with the admission model are that a fraction of individuals die shortly
after hospital admission, and that we distinguish between individuals who
recover fast after their admission and individuals who recover slowly.
The best-fitting model is displayed in Figure~\ref{F:occupancyFit}.
Again, it reproduces quite well the shapes of all the time-series. Under
the occupancy model, the estimated reproduction number after lockdown is
$R_{\mathrm{post}} = 0.734$ and the estimated number of infections before
March 17 2020 is $W = 9.52 \times 10^5$. These estimates are close to
those obtained under the admission model, indicating that the predictions
made by the simple admission model are quite robust to the addition
of model details.

\begin{figure}
    \centering
    \includegraphics[width=.95\textwidth]{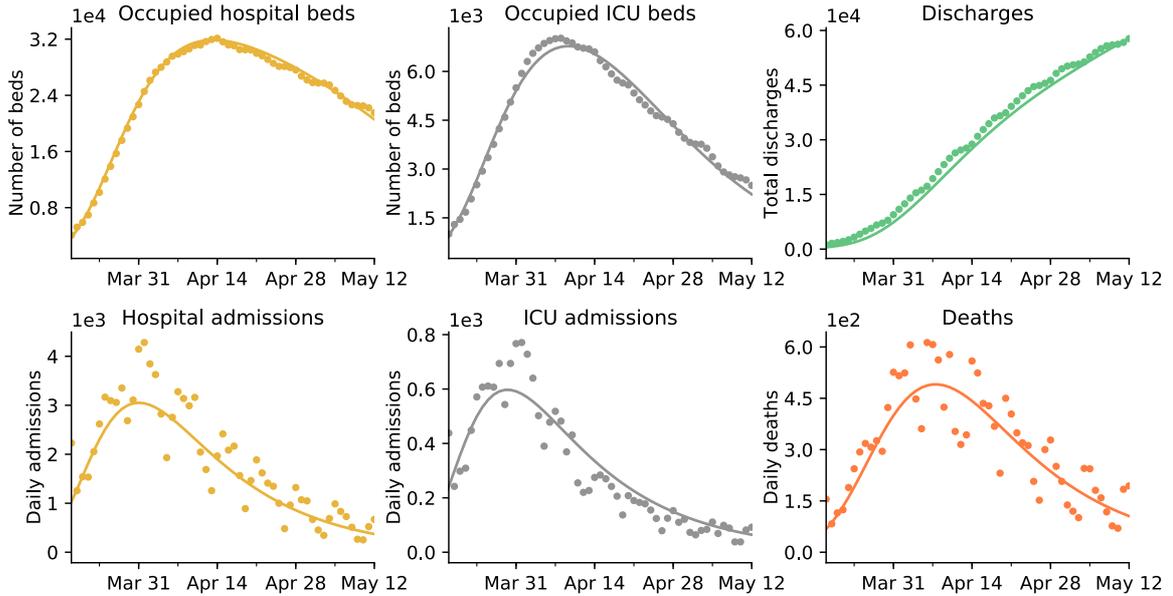}
    \caption{Best fit of the admission model. The solid lines correspond
    to the number of deaths, discharges, occupied ICU and hospital
    beds and ICU and hospital admissions predicted by the occupancy
    model. The dots are the corresponding observed values.}
    \label{F:occupancyFit}
\end{figure}

Overall, our inference work suggests that a simple model can be used to
determine ``global'' epidemiological parameters, such as the reproduction
number and total number of infections, whereas obtaining a prediction for
the number of individuals in hospital or ICU requires to use a more
detailed model that accounts for population heterogeneity. Moreover, it
demonstrates that decorated age-structured models can be readily used to
carry out parameter inference in the context of COVID-19, even when the
underlying compartment structure is quite complex.

Section~\ref{S:inference} contains a detailed description of the
inference procedure, as well as a comparison of the various estimates
that we obtain with estimates currently available in the literature.

\subsection{Connection with ODE models}

Section~\ref{SS:SIR} shows that the SIR model can be seen as a decorated
age-structured model, with well-chosen Markov life-cycle process and
Poisson infection point process. This section extends this representation
to a broader class of ODE models. We will be interested in solutions to
the following set of ODEs:
\begin{align} 
    \begin{split} \label{eq:generalSIR}
    \forall i \in \mathcal{S},\quad 
    \dot{n}_i(t) &= S(t) \sum_{j \in \mathcal{S}} \beta_{ji} n_j(t) + 
    \sum_{j \in \mathcal{S}} q_{ji} n_j(t) \\
    S(t) &= 1 - \sum_{j \in \mathcal{S}} n_j(t).
\end{split}
\end{align}
The parameter $\beta_{ij} \ge 0$ gives the rate of new infections from
individuals in compartment $i$ such that the newly infected individual
starts in compartment $j$. The matrix $T$ with entries $(\beta_{ij})$ is
referred to as the \emph{transmission matrix}. For $i \ne j$, $q_{ij} \ge 0$
corresponds to transition rate from compartment $i$ to compartment $j$.
The \emph{transition matrix} with entries $(q_{ij})$ is denoted by $Q$,
and we further impose that 
\[
    \forall i \in \mathcal{S},\quad q_{ii} = - \sum_{j \ne i} q_{ij}.
\]
This class of ODE models encompasses many common epidemic models,
including the SIR and SEIR models, as well as all models described in
\cite[Chapter~4]{brauer19} for instance.

\begin{prop} \label{prop:ODE}
    \begin{enumerate}
    \item Suppose that $(X(a);\, a \ge 0)$ is a Markov process with jump matrix
    $Q = (q_{ij};\, i,j \in \mathcal{S})$, and that conditional on
    the life-cycle process, $\mathcal{P}$ is a Poisson point process
    with rate $\lambda_i$ when $X(a) = i$. 
    Then, if $(n(t,a);\, t, a \ge 0)$ is a solution to \eqref{eq:Kermack_McK},
    \begin{equation} \label{eq:ODErepr}
        \widetilde{n}_i(t) = \int_0^\infty n(t,a) p(a,i) \diff a
    \end{equation}
    solves \eqref{eq:generalSIR} with $\beta_{ij} = \lambda_i p(0,j)$.
    \item The solution of \eqref{eq:generalSIR} with transmission and
        transition matrices $T$ and $Q$ can be written as
        \eqref{eq:ODErepr} if and only if $\rank(T) = 1$.
    \end{enumerate}
\end{prop}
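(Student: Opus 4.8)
\emph{Part 1 (verification).} The plan is to check directly that $\widetilde n_i(t)=\int_0^\infty n(t,a)p(a,i)\,\diff a$ solves \eqref{eq:generalSIR}. First I would differentiate under the integral sign and use the transport equation to replace $\partial_t n$ by $-\partial_a n$, then integrate by parts in $a$. Assuming $n(t,\cdot)$ decays at infinity, the boundary term at $a=\infty$ drops and the one at $a=0$ yields $n(t,0)\,p(0,i)$; in the remaining integral I would insert the forward Kolmogorov equation $\partial_a p(a,i)=\sum_j p(a,j)q_{ji}$ (valid because $X$ is Markov with jump matrix $Q$), which produces the transition term $\sum_j q_{ji}\widetilde n_j(t)$. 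Two elementary identities then close the computation: since $\mathcal P$ is conditionally Poisson with rate $\lambda_{X(a)}$, its intensity is $\tau(a)=\sum_i\lambda_i p(a,i)$, so the boundary value from \eqref{eq:Kermack_McK} becomes $n(t,0)=S(t)\sum_k\lambda_k\widetilde n_k(t)$ and hence $n(t,0)p(0,i)=S(t)\sum_k\beta_{ki}\widetilde n_k(t)$ with $\beta_{ki}=\lambda_k p(0,i)$; and since $\sum_i p(a,i)=1$ one gets $\sum_i\widetilde n_i(t)=\int_0^\infty n(t,a)\,\diff a$, so the constraint $S(t)=1-\sum_i\widetilde n_i(t)$ is automatic. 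This reproduces \eqref{eq:generalSIR} exactly.

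\emph{Part 2, the core.} The key observation is that the relation $\beta_{ij}=\lambda_i p(0,j)$ from Part 1 says precisely that $T=\lambda\,p_0^{\top}$ is an outer product, where $\lambda=(\lambda_i)_i$ and $p_0=(p(0,j))_j$; in particular $\rank(T)\le 1$. For the converse I would prove the small linear-algebra lemma that any nonnegative matrix of rank $1$ factors as $T=\lambda\,p_0^{\top}$ with $\lambda\ge 0$ and $p_0$ a probability vector: choosing an entry $T_{i_0 j_0}>0$, nonnegativity forces the $i_0$-th row and $j_0$-th column to share a sign, which can be absorbed, after which the column factor is normalized to sum to one. Thus the transmission matrices that arise from decorated PDEs are exactly the nonnegative rank-one matrices.

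\emph{Assembling the equivalence.} For the ``if'' direction, given $\rank(T)=1$ I would factor $T=\lambda\,p_0^{\top}$, take the life-cycle process to be Markov with jump matrix $Q$ started from $p_0$ and the infection process to be conditionally Poisson with rates $\lambda_i$, and invoke Part 1 to obtain a solution of \eqref{eq:generalSIR} of the form \eqref{eq:ODErepr}; since the right-hand side of \eqref{eq:generalSIR} is locally Lipschitz (quadratic in $n$), uniqueness for the Cauchy problem identifies it with the given solution once the initial data coincide. For the ``only if'' direction, representability forces the new-infection flux $S(t)\,T^{\top}n(t)$ to equal $n(t,0)\,p_0$, hence to remain parallel to the fixed vector $p_0$; letting the initial data vary confines $\operatorname{im}(T^{\top})$ to a line, so $\rank(T)\le 1$, and $\rank(T)=1$ as soon as $T\ne 0$.

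\emph{Main obstacle.} I expect Part 1 to be routine; the real content is Part 2, and within it the factorization lemma is the clean half. The delicate point is the initial condition: the representable initial vectors $\widetilde n_i(0)=x_0\int_0^\infty g(a)p(a,i)\,\diff a$ only range over the convex cone spanned by the marginals $\{p(a,\cdot):a\ge 0\}$, so not every nonnegative initial state is reproducible. I would therefore phrase Part 2 at the level of the vector field of \eqref{eq:generalSIR} (which is exactly what the rank condition characterizes), or restrict to initial data lying in that cone; the equivalence of the dynamics holds unconditionally and is where I would take the most care.
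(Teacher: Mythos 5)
Your proposal is correct and follows essentially the same route as the paper's proof: the same differentiate--integrate-by-parts computation with $\tau(a)=\sum_i \lambda_i p(a,i)$ and the forward Kolmogorov equation $\partial_a p(a,i)=\sum_j q_{ji}p(a,j)$ for Part 1, and the same rank-one outer-product factorization $\beta_{ij}=\lambda_i p_j$ recovered via $\lambda_i=\sum_j \beta_{ij}$, $p_j=\beta_{ij}/\lambda_i$ for Part 2. The only place you go beyond the paper is the caveat about which initial conditions are actually representable; the paper elides this point and simply invokes Part 1 after choosing $X(0)$ distributed as $(p_j)$, so your remark is a legitimate refinement rather than a divergence.
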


\begin{proof}
    By differentiating both sides of \eqref{eq:ODErepr} w.r.t.\ time, we
    obtain
    \begin{align*}
        \frac{\diff }{\diff t} \widetilde{n}_i(t)
        &= \int_0^\infty \partial_t n(t,a) p(a,i) \diff a 
        = - \int_0^\infty \partial_a n(t,a) p(a,i) \diff a  \\
        &= n(t, 0)p(0, i) + \int_0^\infty n(t,a) \partial_a p(a,i) \diff a.
    \end{align*}
    By conditioning on the process $X$ and using standard properties of
    Poisson point processes, we can compute the intensity measure of
    $\mathcal{P}$ as  
    \[
        \Angle{\tau, f} = \E\Big[ \int_0^\infty \lambda_{X(a)}f(a) \diff a\Big] 
        = \int_0^\infty f(a) \sum_{i \in \mathcal{S}} p(a,i)\lambda_i \diff a,
    \]
    so that 
    \[
        \forall a \ge 0,\quad \tau(a) = \sum_{i \in \mathcal{S}}
        p(a,i) \lambda_i.
    \] 
    By using the boundary condition of \eqref{eq:Kermack_McK} and the
    fact that $(X(a);\, a \ge 0)$ is a Markov process with generator $Q$
    we get
    \begin{align*}
        \frac{\diff }{\diff t} \widetilde{n}_i(t)
        &= p(0,i) S(t) \int_0^\infty n(t,a) \sum_{j \in \mathcal{S}} \lambda_j p(a,j) \diff a 
        + \int_0^\infty n(t,a) \sum_{j \in \mathcal{S}} q_{ji} p(a,j) \diff a \\
        &= S(t) \sum_{j \in \mathcal{S}} p(0,i) \lambda_j \widetilde{n}_j(t)
        + \sum_{j \in \mathcal{S}} q_{ji} \widetilde{n}_j(t)
    \end{align*}
    from which the first item follows.

    It is clear that if $\beta_{ij} = \lambda_i p(0,j)$ then $\rank(T) = 1$.
    Fix some $T$ and $Q$. If $\rank(T) = 1$, then it can be decomposed as 
    \[
        \forall i,j \in \mathcal{S},\quad \beta_{ij} = \lambda_i p_j 
    \]
    where $(p_j;\, j \in \mathcal{S})$ is a probability vector and 
    $\lambda_i \ge 0$. This decomposition can for instance be recovered from
    \[
        \lambda_i = \sum_{j \in \mathcal{S}} \beta_{ij}, \quad p_j =
        \frac{\beta_{ij}}{\lambda_i}.
    \]
    Consider a Markov process $(X(a);\, a \ge 0)$ with transition matrix $Q$
    and $X(0)$ distributed as $(p_j;\, j \in \mathcal{S})$. If $\mathcal{P}$
    is such that infections occur at rate $\lambda_i$ when $X(a) = i$, 
    the first item proves that \eqref{eq:generalSIR} can be written as
    \eqref{eq:ODErepr}.
\end{proof}

The previous result provides a simple criterion for a system of ODEs to
be represented as a decorated age-structured PDE. This criterion has
already been proposed previously and is referred to as the ``separable
mixing'' assumption \cite{Diekmann1990, Diekmann2013}. A direct
consequence of this result is that not all ODE models can be represented
using a single decorated age-structured PDE. However many models fulfill
the requirement that $\rank(T) = 1$, including models with a single
infectious state, those where new infected individuals always start in
the same state, and all classical models exposed in
\cite[Chapter~4]{brauer19}. In that sense, our framework greatly extends
the usual systems of ODEs widely used in epidemic modeling.

An important situation where $\rank(T) > 1$ is that of heterogeneous
contact rates in the population. For instance, the contact rate could
depend heavily on the (actual) age group to which individuals belong,
and more contacts are made within the same age class than between
classes. A second example is that of spatial heterogeneity, where
contacts are more likely to occur between spatially close individuals.
It remains possible to derive a representation of \eqref{eq:generalSIR}
in the case $\rank(T) > 1$ using an age-structured model, but this
would require to use a multi-dimensional version of \eqref{eq:Kermack_McK}, 
which is a straightforward extension of this model.

\subsection{Relation with previous works and outline}

Deterministic epidemic models where the infectivity depends on the
individuals' age of infection were first introduced in
\cite{kermack1927}, and their mathematical properties have been further
studied thoroughly \cite{reddingius1971, diekmann1977, thieme1985}, see
\cite{inaba17} for a recent account. However, these models have received
surprisingly little attention in applications compared to their ODE
counterpart, which have been widely used for instance in the context of
COVID-19 \cite{roques_using_2020, salje_estimating_2020,
evgeniou_epidemic_2020, DiDomenico2020, djidjou_optimal_2020}. In this
direction, let us mention \cite{forien2021} which makes use of an
epidemic model with memory and \cite{gaubert20} were a set of
transport PDEs similar to \eqref{eq:Kermack_McK} is used. Note that,
in contrast with our approach, the PDE in the latter work has two
dimensions and is structured according to the time since the entrance in
the compartments rather than by infection age. We hope that our work
illustrates well the practical potential of such general models. The
relation between age-structured and ODE epidemic models exposed in
Section~\ref{SS:SIR} is known since their very introduction
\cite{kermack1927}, and has been acknowledged multiple times since then
\cite{metz1978, diekmann1995, brauer05, inaba17}. 

In the most general formulation of a CMJ process, individuals can carry a
trait valued in an abstract measure space that encodes all the information
about their infection \cite{jagers1975branching, Taib1992}. We have
restricted this information to the sequence of compartments visited by
each individual, but we could have included some additional details, such
as the evolution of the viral load for instance, which could be modeled
as a continuous trait following a diffusion. Our result would carry over
to the general setting, with a modified limiting equation in
Theorem~\ref{thm:largePop} which has already been proposed in \cite[Eq.\
(2.5)]{metz1978} and \cite[Chapter~IV, Section~1.3]{Metz1986}. (Note that
none of these works is concerned with the underlying stochastic model.)
However, we believe that our current formalism, where the state of an
infected individual can be described by a discrete set of compartments,
is flexible enough for applications, while being easier to grasp than the
general case.

Non-Markov epidemic models have already been investigated, see
e.g.~\cite{sellke1983asymptotic, pang2020functional, ball1986unified,
barbour1975duration}. In particular our work shares similarities with a
recent series of work in this direction \cite{pang2020functional,
forien2021epidemic, pang2021functional}. Let us briefly show how to
translate the model in \cite{forien2021epidemic} in our framework. Let
$\lambda = (\lambda(a);\, a \ge 0)$ be some random function giving the
infectiousness of a typical individual in the population. Conditional on
$\lambda$, let $\mathcal{P}$ be a time-inhomogeneous Poisson point
process with intensity $\lambda$, define 
\[
    \zeta = \inf \{ a : \lambda(a) > 0 \},\qquad \eta + \zeta = \sup \{ a : \lambda(a) > 0 \}
\]
and set 
\[
    \forall a \ge 0,\quad X(a) =
    \begin{cases}
        E &\text{if $a \in [0, \zeta)$} \\
        I &\text{if $a \in [\zeta, \zeta + \eta)$} \\
        R &\text{if $a \in [\zeta + \eta, \infty)$}.
    \end{cases}
\]
Then, up to the choice of the initial condition and the fact that we make
a branching assumption, the model considered in \cite{forien2021epidemic}
coincides with the model considered here for this specific choice of the
pair $(\mathcal{P}, X)$. (It coincides exactly with the extension of our
model considered in \cite{duchamps2021general}.)

The main result in \cite{forien2021epidemic} shows that the fraction of
individuals in each state ($S$, $E$, $I$, and $R$) converges to the
solution of a set of Volterra equations, see their Theorem~2.1. A
straightforward computation shows that the latter equations are
equivalent to the non-linear version our decorated
McKendrick-von~Foerster PDE obtained by replacing \eqref{eq:McKVF} by
\eqref{eq:Kermack_McK}, with the specific choice of $(\mathcal{P}, X)$
described above.

The idea of representing a general branching
population by its age structure has a rich history in probability theory
\cite{jagers1975branching, fan2020convergence, Jagers1984,
jagers2011population, jagers2000population, hamza2013age, tran2008large,
ferriere2009stochastic} and the connection with the McKendrick-von
Foerster PDE has been acknowledged several times \cite{fan2020convergence, 
hamza2013age}. In the latter two works, the authors allow for birth and
death rates that may depend not only on abundances of each type, but also
on the whole age structure of the population. This impressive level of
generalization comes at the cost of assuming that the process describing
the evolution of the empirical measure on ages and types is Markovian. In
particular, birth and death rates are not allowed to depend on past
individual birth events. The Markov property then allows the use of a
generator for the empirical measure and with some extra finite second
moment assumptions on the intensity measure, this approach allows the
authors to obtain a law of large numbers and a central limit theorem.  

Even if the current work is not as mathematically challenging as that
alluded to above, we believe that our point of view does deserve to be
highlighted in the current sanitary crisis since it provides both a
general modeling framework and an efficient inference methodology.
Furthermore, since we ignore finite population effects, our proofs are
quite elementary compared to \cite{fan2020convergence, hamza2013age} and
should be accessible to a much wider audience interested in such a
modeling approach. Finally, as far as we can tell, the duality result
exposed in Section~\ref{SS:dualFK} is new and can presumably be extended
to more general branching processes where birth and death rates are
allowed to be frequency-dependent. In \cite{duchamps2021general}, some of the authors
of the present work show that this duality result has a natural
counterpart in a model with a finite but large population.

\paragraph{Outline.}
The remainder of the paper is organized as follows.
Section~\ref{S:feynmanKac} is devoted to the study of Eq.\
\eqref{eq:McKVF}. After providing a formal construction of the branching
process that we consider in Section~\ref{SS:assumption}, the definition
of a weak solution to \eqref{eq:McKVF} is given in
Section~\ref{SS:weakSolution}. Then, we derive two probabilistic
representations of this solution: we show in
Section~\ref{SS:forwardFormula} that it corresponds to the first moment
of the branching process that we are studying, when viewed as a random
measure on the ages of infection; Section~\ref{SS:dualFK} provides a
construction of the weak solution using a dual genealogical process. The
two laws of large numbers are proved in Section~\ref{S:proofs}. Finally,
Section~\ref{S:inference} describes the inference procedure, and compares
the estimates that we obtain to known estimates from the literature.

\section{Two Feynman-Kac formul\ae}
\label{S:feynmanKac}

\subsection{Assumptions and notation}
\label{SS:assumption}

\paragraph{CMJ branching process with suppression.}
Recall that the infection process is modeled by a Crump-Mode-Jagers (CMJ)
branching process \cite{jagers1975branching, Nerman1981} with no death,
starting from one individual called the progenitor (or root of the tree).
It can be briefly constructed as follows.

Using the Ulam-Harris labeling, the population can be indexed by
\[
    \mathscr{U} \coloneqq \{\emptyset\} \cup \bigcup_{n \ge 1}^\infty
    \mathbb{N}^n.
\]
The set $\mathscr{U}$ encodes a tree where $xi \coloneqq (x,i)$ is the
$i$-th child of $x$. Each individual $x \in \mathscr{U}$ is characterized by a
pair $(\mathcal{P}_x, X_x)$ embodying respectively the processes of
secondary infection events from $x$ and of types carried by $x$. Each
pair $(\mathcal{P}_x, X_x)$ is an i.i.d.\ copy of the pair $(\mathcal{P},
X)$ with law $\mathscr{L}$, except when $x$ is the root, where it is
distributed as $(\tilde{\cal P}, \tilde{X})$ with law
$\tilde{\mathscr{L}}$ (more on that below). 

An infection time $\sigma_x$ can be assigned to all individuals
inductively as follows, with the convention that $\sigma_x = \infty$ for
individuals that are not infected. Suppose that $\sigma_\emptyset$ is
known (see below). Then, if $\sigma_x < \infty$ has been defined, let
$A_1, \dots, A_{N_x}$ denote the atoms of $\mathcal{P}_x$ in increasing
order. That is,
\[
    \mathcal{P}_x = \sum_{i = 1}^{N_x} \delta_{A_i}
\]
with $A_1 < \dots < A_{N_x}$. Set $\sigma_{xi} = \infty$ for $i > N_x$,
and, independently for each $i \le N_x$, set
\[
    \sigma_{xi} = \begin{cases}
        \sigma_x + A_i &\text{ with probability $c(\sigma_x + A_i)$}\\
        \infty &\text{ with probability $1-c(\sigma_x + A_i)$},
    \end{cases}
\]
where we recall that $(c(t);\, t \ge 0)$ is the contact rate. 
This amounts to trimming the tree by pruning the subtree stemming 
from $x$ with probability $1-c(\sigma_x)$, see Figure~\ref{F:CMJ}.

\begin{figure}
\centering
\includegraphics[scale=0.5]{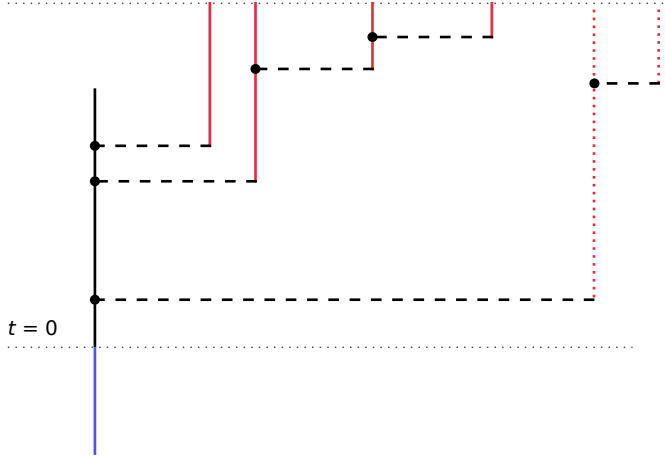}
\caption{The initial individual  $(\tilde {\cal P}, \tilde X)$ is
    represented by a black segment. In Section~\ref{SS:assumption}, we
    assume that at time $t=0$, the age of the initial individual (length
    of the blue segment) is distributed according to a probability
    density $g$.  If a branching event is observed at time $t$ (see e.g.,
    black dots), the infection occurs with probability $c(t)$. In the
    CMJ, this amounts to prune the corresponding subtree with probability
    $c(t)$ (dotted red tree).}  
    \label{F:CMJ}
\end{figure}

\paragraph{Initial shifted law.} In order to connect the distribution of
the CMJ to the McKendrick-von~Foerster equation, we allow the progenitor
to have an initial age with an arbitrary distribution. Let $A$ be a r.v.\
distributed according to some density $g$. Define the infection time of
$\emptyset$ as $\sigma_\emptyset = -A$. The secondary infections induced
by the progenitor occur at some times $\sigma_\emptyset +
\widetilde{A}_1$, \dots, $\sigma_{\emptyset} + \widetilde{A}_{\widetilde{N}}$, 
where $(\widetilde{A}_1, \dots, \widetilde{A}_{\widetilde{N}})$ are the
atoms of a point process $\widetilde{\mathcal{P}}$ defined as 
\[
    \widetilde{\mathcal{P}} = \sum_{A_i \in \mathcal{P}}
   \mathbbm{1}_{\{A_i > A\}} \delta_{A_i},
\]
where the pair $(\mathcal{P}, X)$ has law $\mathscr{L}$. The point
process $\widetilde{\mathcal{P}}$ is obtained from $\mathcal{P}$ by
erasing all atoms that would lead to an infection before $t=0$. Define
$\widetilde{X} = X_\emptyset$, and let $\tilde{\mathscr{L}}$ be the
distribution of $(\widetilde{\mathcal{P}}, \widetilde{X})$. We refer to
$\tilde{\mathscr{L}}$ as the initial shifted law. The infection times
$(\sigma_x;\, x \in \mathscr{U} \setminus \{\emptyset\})$ are then
defined recursively as above, from i.i.d.\ pairs $(\mathcal{P}_x, X_x;\,
x \in \mathscr{U}\setminus \{\emptyset\})$ with the original law
$\mathscr{L}$.

\paragraph{Assumptions.} The following assumptions will be made
implicitly in the remainder of our work. For simplicity, we assume that
the contact rate $(c(t);\, t \ge 0)$ is a piecewise continuous function,
and that for any $a \ge 0$, the process $(X(a);\, a \ge 0)$ is a.s.\
continuous at $a$.

Recall that the intensity measure of the point process $\mathcal{P}$ 
is denoted by $\tau$, and implicitly defined as 
\[
    \Angle{\tau, f} = \E \big[ \Angle{\mathcal{P}, f} \big]
\]
for any test function $f$, where we used the notation $\Angle{\mu, f} =
\int f \diff \mu$. We assume that $\tau$ has a density w.r.t.\ the
Lebesgue measure that we still denote by $(\tau(a);\, a \ge 0)$, and
assume that
\[
    R_0 \coloneqq \int_0^\infty \tau(a) \diff a < \infty.
\]
We also assume that there exists a unique parameter $\alpha \in \R$,
the so-called Malthusian parameter of the (untrimmed) CMJ process, such
that
\begin{equation} \label{eq:malthusian}
    \int_0^\infty \exp(-\alpha a) \tau(a) \diff a = 1. 
\end{equation}
The parameter $\alpha$ can be either positive (supercritical case) or
negative (subcritical case).

\subsection{McKendrick-von Foerster PDE: Weak solutions}
\label{SS:weakSolution}

This section provides the existence and uniqueness of weak solutions to
Eq.~\eqref{eq:McKVF}. Even if similar results are well-known for the
time-homogeneous McKendrick-von~Foerster equation \cite[Chapter~1]{inaba17}, 
we derive them briefly here for the sake of completeness.

In order to motivate our definition of weak solutions, we start by giving
a well-known formal resolution of the PDE using the method of
characteristics. Fix $a>0$. Let
\[
    A(t) = a - t 
\] 
Then
\begin{align*}
    \frac{\diff}{\diff s} n(t-s, A(s)) = 
    - \partial_t n(t-s,A(s)) - \partial_a n(t-s,A(s))=0,
\end{align*}
so that $s\mapsto n(t-s,a-s)$ is conserved along the characteristics, i.e.,
\[
    \forall s<a, \ \ n(t,a) = n(t-s,a-s).
\]
It follows that
\begin{align}\label{eq:n}
    n(t,a) = 
    \begin{cases}
        g(a-t) &\text{when $a > t$}\\
        b(t-a) &\text{when $a\leq t$}
    \end{cases}
\end{align}
where 
\[
    b(t) =  c(t) \int_0^\infty n(t,a) \tau(a)\diff a 
\]
is the number of new infections at time $t$. We now determine the
function $b$. Injecting the previous expression into the ``age'' boundary
condition of the PDE, we obtain a convolution equation for $b$: 
for every $t>0$
\begin{equation} \label{eq:renewal}
    b(t) = c(t) \int_0^t b(t-a) \tau(a)\diff a  +  c(t)
    \int_t^\infty g(a-t) \tau(a) \diff a.
\end{equation}
Recall that $\alpha$ denotes the Malthusian parameter defined in
\eqref{eq:malthusian}.

\begin{lem}
    There exists a unique solution $b$ to~\eqref{eq:renewal} which is
    locally integrable. Moreover, for any $\delta \ge 0$ such that
    $\delta >\alpha$ we have $b \in \mathscr{L}^{1,\delta}$, where 
    $\mathscr{L}^{1,\delta}$ denotes the set of all functions 
    $f\colon \mathbb{R}_+\to \mathbb{R}$ such that 
    $\norm{f}_{L^{1,\delta}} \coloneqq \int_0^\infty e^{-\delta t}
    \abs{f(t)} \diff t < \infty$.
\end{lem}

\begin{proof}
Fix $\delta > \alpha$ and let $L^{1,\delta}$ denote the quotient space 
of $\mathscr{L}^{1,\delta}$ by the relation $\sim_\delta$, where
$f\sim_\delta g$ if $\norm{f-g}_{L^{1,\delta}}=0$. Then define the linear
operator $\Phi\colon L^{1,\delta}\to
L^{1,\delta}$ by
\[
    \Phi f\colon t \mapsto c(t) \int_0^t f(t-u) \tau(u) \diff u.
\]
Then we have
\begin{align*}
    \norm{\Phi f}_{L^{1,\delta}} &= \int_0^\infty e^{-\delta t} \Phi f(t) \diff t 
    = \int_0^\infty e^{-\delta t} c(t) \int_0^t f(t-u) \tau(u) \diff u
    \diff t\\
    &= \int_0^\infty e^{-\delta u} f(u) \int_u^\infty
    \tau(t-u)e^{-\delta (t-u)} c(t) \diff t \diff u.
\end{align*}
Now using that 
\[
    \int_u^\infty \tau(t-u) e^{-\delta (t-u)} c(t) \diff t \le \int_0^\infty
    \tau(t) e^{-\delta t} \diff t < 1
\]
we obtain that $\norm{\Phi} < 1$. Define
\[
    \Psi \coloneqq \Id - \Phi.
\]
Then $\Psi$ is invertible with inverse $\sum_{k \ge 0} \Phi^k$.
Note that equation~\eqref{eq:renewal} can be written as
\[
    \Psi(b) = F,
\]
where
\[
    F\colon t \mapsto c(t) \int_t^\infty \tau(a) g(a-t) \diff a.
\]
Noting that $F \in L^{1,\delta}$ as 
\[
    \int_0^\infty e^{-\delta t} F(t) \diff t 
    \le \int_0^\infty \int_t^\infty \tau(a) g(a-t) \diff a \diff t <
    \infty
\]
proves existence and uniqueness of the solution $b$ to \eqref{eq:renewal}
in $L^{1,\delta}$. Now for any two functions $b_1$ and $b_2$ such that
$b_1\sim_\delta b_2$ and $b_1$ and $b_2$ both satisfy \eqref{eq:renewal},
we have $b_1=b_2$ (i.e., there is a single element in the equivalence
class of $b$ verifying \eqref{eq:renewal} for all $t$). This shows
uniqueness of the solution $b$ to \eqref{eq:renewal} in 
$\mathscr{L}^{1,\delta}$.

Since all elements of $\mathscr{L}^{1,\delta}$ are locally integrable,
this also shows the existence of a locally integrable solution to
\eqref{eq:renewal}. Its uniqueness can be proved following the exact same
reasoning as previously, replacing integrations on $[0,\infty)$ by
integration on compact intervals.
\end{proof}

\begin{defn}
    We say that $(n(t,a);\, t,a\geq0)$ is the weak solution to the
    McKendrick-von~Foerster PDE with initial condition $g$ if it
    satisfies the relation \eqref{eq:n} where $(b(t);\, t\geq0)$ is the
    unique locally integrable solution to \eqref{eq:renewal} displayed in
    the previous lemma. 
\end{defn}

\subsection{A forward Feynman-Kac formula}
\label{SS:forwardFormula}

Consider a CMJ with initial shifted law and define
\[
    Z(t) \coloneqq \sum_{x} \mathbbm{1}_{\{\sigma_x\in(0,t]\}}, 
    \quad  B(t) \coloneqq \E\big( Z(t) \big) 
\]
where $Z(t)$ is interpreted as the number of infections between $0$ and $t$. 
Recall that $R_0 = \int_0^\infty \tau(u) \diff u<\infty$ guarantees
that $B(t)<\infty$ for all $t\geq0$. Finally, $B$ is non-decreasing and
we denote by $\diff B$ the Stieljes measure associated to $B$.

\begin{lem}\label{lem:mcvonf}
    There exists a locally integrable function $(b(t);\, t\geq0)$ such that 
    \[
        \diff B(t) = b(t) \diff t.
    \]
    Further, $b$ coincides with the unique locally integrable solution of
    the convolution equation~\eqref{eq:renewal}. 
\end{lem}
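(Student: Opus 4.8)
The plan is to decompose the expected number of infections by generation and to recognise the resulting recursion as the Neumann series $\sum_{k\ge 0}\Phi^k$ already analysed in the previous lemma. I would write $B = \sum_{k\ge 1} B_k$, where $B_k(t)$ is the expected number of generation-$k$ infections falling in $(0,t]$ (the progenitor is generation $0$ and is infected at time $\sigma_\emptyset = -A \le 0$, hence is not counted). Since $R_0 < \infty$ guarantees $B(t) < \infty$, this decomposition is legitimate and it suffices to show that each $B_k$ is absolutely continuous with a locally integrable density $b_k$ and that $b \coloneqq \sum_{k\ge 1} b_k$ is locally integrable; the identification with \eqref{eq:renewal} then drops out of the recursion.

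For the base case I would compute the intensity of the first-generation infection times directly from the initial shifted law. Conditioning on $A = a_0$, the surviving atoms of $\widetilde{\mathcal P}$ sit at ages $a > a_0$ with intensity $\tau(a)\,\diff a$ (the restriction of the intensity of $\mathcal P$), which at real time $t = a - a_0 > 0$ gives intensity $\tau(t + a_0)\,\diff t$; thinning by the contact rate multiplies this by $c(t)$. Averaging over $A \sim g$ and substituting $a = t + a_0$ yields
\[
    b_1(t) = c(t)\int_0^\infty g(a_0)\,\tau(t + a_0)\,\diff a_0 = c(t)\int_t^\infty \tau(a)\,g(a-t)\,\diff a = F(t),
\]
which is exactly the source term $F$ of the previous lemma, and in particular $b_1 = F \in L^{1,\delta}$ for every $\delta > \alpha$.

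For the inductive step I would use the branching property together with a first-moment (Campbell-type) computation. Conditional on the generation-$k$ configuration, each generation-$k$ individual infected at time $s$ starts an independent copy of the process with law $\mathscr{L}$ and age $0$, so it produces generation-$(k+1)$ infections at time $t > s$ with conditional intensity $c(t)\,\tau(t-s)\,\diff t$. Taking expectations and using the inductive hypothesis that the generation-$k$ times have intensity $b_k(s)\,\diff s$ gives
\[
    b_{k+1}(t) = c(t)\int_0^t \tau(t-s)\,b_k(s)\,\diff s = \Phi b_k(t),
\]
so that $b_k = \Phi^{k-1} F$ for all $k \ge 1$. Summing over $k$ and invoking the bound $\norm{\Phi} < 1$ on $L^{1,\delta}$ from the previous lemma, the series $b = \sum_{k\ge 1}\Phi^{k-1}F = \Big(\sum_{j\ge 0}\Phi^j\Big)F = \Psi^{-1}F$ converges in $L^{1,\delta}$, hence defines a locally integrable function that is precisely the unique solution of $\Psi(b) = F$, i.e.\ of \eqref{eq:renewal}. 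Since $B_k(t) = \int_0^t b_k$ with $b_k \ge 0$, Tonelli's theorem gives $B(t) = \sum_k \int_0^t b_k = \int_0^t b(s)\,\diff s$, which is the claimed absolute continuity together with the identification of the density.

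The main obstacle is making the first-moment computations fully rigorous rather than formal: one must justify the Campbell-type identity $\E\big[\sum_{\text{gen-}k}\tau(t-\sigma_x)\big] = \int_0^t \tau(t-s)\,b_k(s)\,\diff s$ by conditioning on the subtree of ancestors, so that the offspring point processes $\mathcal P_x$ and the independent $c$-thinnings are independent of the birth times $\sigma_x$, and one must control the interchange of expectation with the infinite sum over generations. The latter is exactly where the finiteness $R_0 < \infty$ and the contraction $\norm{\Phi} < 1$ on $L^{1,\delta}$ do the work, guaranteeing both $B(t) < \infty$ and the $L^{1,\delta}$-convergence of $\sum_k b_k$; the former requires a careful Fubini/Tonelli argument relying on $\tau \in L^1$ and the local integrability of each $b_k$.
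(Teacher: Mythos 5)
Your argument is correct, but it follows a genuinely different route from the paper's. The paper works with the aggregate birth measure $B$ directly: it splits births into those produced by the root (the term $M(t)$, which is $\int_0^t F$) and those produced by later individuals, and then handles the awkward fact that the $c$-thinning of $\mathcal{P}_x$ depends on the absolute birth time $\sigma_x$ by discretizing time into $n$ windows, replacing $c$ by its supremum (resp.\ infimum) over each window so that the thinned measure becomes independent of $\sigma_x$, and passing to the limit $n\to\infty$ with two applications of the Bounded Convergence Theorem; this sandwiches $B$ between two Riemann-type sums and yields the integral form of \eqref{eq:renewal}, which is then differentiated. You instead decompose by generation, compute the intensity of each generation's birth times by an exact Campbell/disintegration argument (conditioning on the parent's birth time $s$ turns the thinning probability into $c(t)$ at the child's absolute birth time $t$, with no sup/inf sandwich needed, precisely because $\mathcal{P}_x$ is independent of $\sigma_x$), and recognize the recursion $b_{k+1}=\Phi b_k$, $b_1=F$, so that $b=\sum_k\Phi^{k-1}F=\Psi^{-1}F$ is manifestly the fixed point constructed in the preceding lemma. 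Your route buys two things: the absolute continuity of $\diff B$ is \emph{constructed} generation by generation rather than asserted (the paper's ``easily follows from the fact that $\tau$ has a density'' is terse on this point), and the identification with \eqref{eq:renewal} is immediate from the Neumann series rather than requiring a separate differentiation step. The cost is that you must justify the conditioning on the null event $\{\sigma_x=s\}$ — which is exactly what the paper's discretization is designed to avoid — but this is routine via Fubini and the independence of $(\mathcal{P}_x,X_x)$ from the ancestral data determining $\sigma_x$, as you correctly note. Both proofs ultimately rest on the same two ingredients: $\mathcal{P}_x\perp\sigma_x$ and the contraction $\norm{\Phi}<1$ on $L^{1,\delta}$.
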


\begin{proof}
The fact that $\diff B$ has a density easily follows from the fact that $\tau$
has a density. The fact that $B(t) < \infty$ ensures that $b$ is locally
integrable.

Define $\bar{\mathcal{P}}_x$ the infection measure obtained from
$\mathcal{P}_x$ after random thinning by the function $(c(t);\, t\geq0)$.
Namely, conditional on $\sigma_x$ and the atoms $A_1<A_2<\cdots$ of
$\mathcal{P}_x$, we remove independently each of the atoms with respective
probabilities $1-c(\sigma_x+A_1), 1-c(\sigma_x+A_2), \dots$, whereas the
other atoms remain unchanged.

Fix $t>0$. Let $k\leq n\in{\mathbb N}$. Define 
${\mathbb T}^{k,n}(\mathcal{P}_x)$ as the measure obtained from 
$\mathcal{P}_x$ as follows. Conditional on the atoms $A_1<A_2<\cdots$ of
$\mathcal{P}_x$, we remove independently each of the atoms with
respective probabilities 
\[
    1-\max_{z\in(t\frac{k}{n},t\frac{k+1}{n}]} c(z+A_1),
    1-\max_{z\in(t\frac{k}{n},t\frac{k+1}{n}]} c(z+A_2), \cdots
\] 
and leave other atoms unchanged. Note that the thinning procedure is now
independent of the starting time $\sigma_x$. Further, if
$\sigma_x\in(t\frac{k}{n},t\frac{k+1}{n}]$, the point measure ${\mathbb
T}^{k,n}(\mathcal{P}_x)$ dominates $\bar{\mathcal{P}}_x$.

We decompose the births on $(0,t]$ into two parts:
individuals stemming from the root $\emptyset$ and a second part from
subsequent births. Using the fact that for every
individual $x$, the (un-suppressed) random measure $\mathcal{P}_x$ is
independent  of its birth time $\sigma_x$ (see second equality below), and setting 
$M(t) \coloneqq \int_0^t \int_0^\infty g(a) \tau(a+u) c(u) \diff a \diff u$,
we get
\begin{align*} 
    B(t) &= \sum_{k=0}^{n-1} \sum_{x \ne \emptyset} 
            \mathbb{E} \bigg(  
                \mathbbm{1}\Big( \sigma_x \in \Big(t \frac{k}{n}, t\frac{k+1}{n}\Big] \Big) 
                \int_0^{t-\sigma_x} \bar {\cal P}_x (\diff a) 
            \bigg) + M(t) \\ 
         &\leq \sum_{k=0}^{n-1} \sum_{x \ne \emptyset} 
            \mathbb{E} \bigg( 
                \mathbbm{1} \Big(\sigma_x \in \Big(t \frac{k}{n}, t\frac{k+1}{n}\Big] \Big) 
                \int_0^{t-t\frac{k}{n}} \mathbb{T}^{k,n}(\mathcal{P}_x)(\diff a)  
            \bigg) + M(t) \\
         & = \sum_{k=0}^{n-1} \sum_{x \ne \emptyset} 
             \mathbb{E} \bigg(  
                 \mathbbm{1}\Big(\sigma_x \in \Big(t \frac{k}{n}, t\frac{k+1}{n}\Big] \Big) 
             \bigg) 
             \mathbb{E} \bigg(
                 \int_{0}^{t-t\frac{k}{n}} \mathbb{T}^{k,n}(\mathcal{P})(\diff a)
             \bigg) + M(t) \\
         & = \sum_{k=0}^{n-1} 
             \mathbb{E} \bigg( \sum_{x \ne \emptyset} 
                 \mathbbm{1}\Big(\sigma_x \in \Big(t \frac{k}{n}, t\frac{k+1}{n}\Big]\Big) 
             \bigg) 
             \mathbb{E} \bigg(
                 \int_{0}^{t-t\frac{k}{n}} \mathbb{T}^{k,n}(\mathcal{P})(\diff a)
             \bigg) + M(t) \\
         & = \sum_{k=0}^{n-1} 
             \bigg( B\Big(t\frac{k+1}{n}\Big) - B\Big(t\frac{k}{n}\Big) \bigg) 
             \mathbb{E} \bigg(
                 \int_{0}^{t-t\frac{k}{n}} \mathbb{T}^{k,n}(\mathcal{P})(\diff a)
             \bigg) + M(t)\\
         & = \sum_{k=0}^{n-1} 
             \bigg( B\Big(t\frac{k+1}{n}\Big) -  B\Big(t\frac{k}{n}\Big) \bigg) 
             \int_{0}^{t-t\frac{k}{n}} c^{k,n}(u)\tau(u) \diff u + M(t).
\end{align*}
with $c^{k,n}(y) = \max_{v\in(t \frac{k}{n},t\frac{k+1}{n}]} c(y+v)$. In
particular, if $t k/n\to x$, and $x+y$ is a continuity point of $c$, we have
$c^{k,n}(y)\to c(x+y)$. We will pass to the limit $n\to\infty$ in the
latter inequality. Recall that $c$ is bounded (and valued in $[0,1]$) and
right-continuous. The first term on the RHS can be written
under the form
\[ 
    \sum_{k=0}^{n-1} \bigg( B\Big(t\frac{k+1}{n}\Big) - B\Big(t\frac{k}{n}\Big) \bigg)
        \int_0^{t-t\frac{k}{n}} c^{k,n}(u)\tau(u) \diff u  
    = \int_0^t f^{(n)}(y) \diff B(y),
\]
where 
\[
    f^{(n)}(y) = \int_0^{t-[y]_n} 
    \tau(u) \sup_{v \in ([y]_n,\; [y]_n + \frac{t}{n}]} c(v+u)  \diff u 
    \quad\text{and} \quad [y]_n  = \frac{t}{n} \floor{ny/t}.
\]
We will now apply twice the Bounded Convergence Theorem. On the one hand,
for a fixed value of $y$, as $n \to \infty$
\[
    \mathbbm{1}_{[0, t-[y]_n]}(u) \tau(u) \sup_{v \in ([y]_n, [y]_n +\frac{t}{n}]} c(v+u) 
    \longrightarrow \mathbbm{1}_{[0, t-y]}(u) \tau(u) c(y+u) \quad \text{Lebesgue a.e.} 
\]
Further, the latter term (i.e., the integrand in the integral defining
$f^{(n)}$) is uniformly bounded by $\tau$ and $\int_0^\infty \tau(u) \diff u < \infty$. 
A first application of the Bounded Convergence Theorem implies that for
every $y$, as $n \to \infty$
\[
    f^{(n)}(y) \to \int_0^{t-y} c(y+u) \tau (u) \diff u.
\]
On the other hand, the uniform bound, $f^{(n)}(y)\leq R_0 = \int_0^\infty \tau(u) \diff u$ 
for all $y,n$, allows us to again apply the Bounded Convergence Theorem,
so we get
\begin{align*} 
    B(t) \leq  \int_{0}^t b(y) \int_{0}^{t-y} c(y+u) \tau(u) \diff u \diff y 
             + \int_0^t \int_0^\infty g(a) \tau(a+u) c(u) \diff a \diff u. 
\end{align*}
By replacing the $\max$ by a $\min$ and using a similar argument, one can
establish the same lower bound and strengthen the latter inequality into
an equality. A simple change of variable $v=u+y$ and interchanging the
order of integration yields
\[
    B(t) = \int_{0}^t c(v)  \int_{0}^{v} \tau(v-y) b(y) \diff y \diff v
    + \int_0^t \int_0^\infty g(a) \tau(a+u) c(u) \diff a \diff u.  
\] 
Finally, differentiating with respect to $t$ yields the desired result.
\end{proof}

\begin{cor}[Forward Feynman-Kac formula] \label{cor:forwardFK}
    For every $t \ge 0$, define
    \[
        \bar \mu_t(\diff a\times \{i\}) \coloneqq  n(t,a) \times
        \mathbb{P}(X(a)=i) \diff a , 
    \]
    where $n$ is the unique weak solution to the McKendrick-von~Foerster
    PDE with initial condition $g$. Then
    \begin{equation}
        \bar \mu_t(\diff a\times \{i\}) 
            = \mathbb{E}\Big(\sum_{x} \mathbbm{1}_{\{\sigma_x<t\}} 
            \delta_{(t-\sigma_x, X_x(t-\sigma_x))}(\diff a\times \{i\})\Big) 
    \end{equation}
    where the expected value is taken with respect to a CMJ process
    starting with one individual with infection and life-process
    distributed according to the shifted law $\tilde{\mathscr{L}}_g$. 
\end{cor}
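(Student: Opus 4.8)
The plan is to prove the stated measure equality by testing against functions of the form $f(a)\mathbbm{1}_{\{j=i\}}$, for bounded continuous $f$ and each fixed $i\in\mathcal{S}$, and to split the sum over the CMJ population into the contribution of the root $\emptyset$ and that of all of its descendants. The root will contribute the part of $\bar\mu_t$ supported on ages $a>t$, while the descendants contribute the part supported on $a\le t$; both are identified using Lemma~\ref{lem:mcvonf} together with the independence structure built into the model. Throughout, Tonelli's theorem justifies exchanging $\mathbb{E}$ and the sum over labels (first for $f\ge0$, then by difference), finiteness being guaranteed by $B(t)<\infty$ and the boundedness of $f$ and $p(\cdot,i)$.

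First I would isolate the root. By construction $\sigma_\emptyset=-A$ with $A\sim g$, and the root carries the life-cycle process $\widetilde{X}=X_\emptyset\stackrel{d}{=}X$, independent of $A$. Hence its age at time $t$ is $t+A>t$, and its expected contribution is
\[
\mathbb{E}\big(f(t+A)\mathbbm{1}_{\{X(t+A)=i\}}\big)=\int_0^\infty f(t+a)\,p(t+a,i)\,g(a)\,\diff a=\int_t^\infty f(a)\,p(a,i)\,g(a-t)\,\diff a,
\]
after the change of variable $a\mapsto a-t$. This equals $\int_t^\infty f(a)\,p(a,i)\,n(t,a)\,\diff a$, since $n(t,a)=g(a-t)$ for $a>t$.

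Next I would treat the descendants. The key observation is that, for each fixed Ulam--Harris label $x\ne\emptyset$, the infection time $\sigma_x$ is a deterministic function of the infection point processes of the strict ancestors of $x$ together with the contact-rate thinning decisions, and is therefore independent of the pair $(\mathcal{P}_x,X_x)$; in particular $\sigma_x\perp X_x$ and $X_x\stackrel{d}{=}X$. Conditioning on $\sigma_x$ and using this independence gives, label by label,
\[
\mathbb{E}\big(\mathbbm{1}_{\{0<\sigma_x<t\}}\,f(t-\sigma_x)\,\mathbbm{1}_{\{X_x(t-\sigma_x)=i\}}\big)=\mathbb{E}\big(\mathbbm{1}_{\{0<\sigma_x<t\}}\,f(t-\sigma_x)\,p(t-\sigma_x,i)\big).
\]
Summing over all $x\ne\emptyset$ and recognising the expected counting measure of the infection times as $\diff B$, with $\diff B(s)=b(s)\,\diff s$ by Lemma~\ref{lem:mcvonf}, yields
\[
\mathbb{E}\Big(\sum_{x\ne\emptyset}\mathbbm{1}_{\{0<\sigma_x<t\}}\,f(t-\sigma_x)\,p(t-\sigma_x,i)\Big)=\int_0^t f(t-s)\,p(t-s,i)\,b(s)\,\diff s=\int_0^t f(a)\,p(a,i)\,b(t-a)\,\diff a,
\]
which is $\int_0^t f(a)\,p(a,i)\,n(t,a)\,\diff a$ because $n(t,a)=b(t-a)$ for $a\le t$. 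Adding the root and descendant contributions reconstitutes $\int_0^\infty f(a)\,p(a,i)\,n(t,a)\,\diff a=\langle\bar\mu_t(\cdot\times\{i\}),f\rangle$, and since such test functions determine a measure on $[0,\infty)\times\mathcal{S}$, the claim follows.

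The step I expect to be the main obstacle is the passage from the label-wise independence $\sigma_x\perp X_x$ to the identity for the full sum. The delicate point is that $\mathcal{P}_x$ and $X_x$ are \emph{not} independent, so one cannot first condition on the whole skeleton $(\sigma_y)_y$ and then average the $X_x$'s: the skeleton is entangled with the $X_x$ through the correlated $\mathcal{P}_x$. The computation must instead be carried out label by label, replacing $\mathbbm{1}_{\{X_x(t-\sigma_x)=i\}}$ by $p(t-\sigma_x,i)$ \emph{before} summing, and only afterwards identifying the expected sum over $\sigma_x$ with integration against $\diff B$ via Lemma~\ref{lem:mcvonf}. Once this bookkeeping is in place, the remaining manipulations (Tonelli and the change of variable $a=t-s$) are routine.
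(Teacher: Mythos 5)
Your proof is correct and rests on the same skeleton as the paper's: split the expected empirical measure into the root's contribution and that of the descendants, identify the latter as an integral of $f(t-s)p(t-s,i)$ against $\diff B(s)=b(s)\,\diff s$ via Lemma~\ref{lem:mcvonf}, and match the two pieces with the two branches of the characteristic formula ($n(t,a)=g(a-t)$ for $a>t$, $n(t,a)=b(t-a)$ for $a\le t$). Where you genuinely diverge is in how you justify replacing $\mathbbm{1}_{\{X_x(t-\sigma_x)=i\}}$ by $p(t-\sigma_x,i)$: you condition on $\sigma_x$ label by label, invoking the independence of $\sigma_x$ from $(\mathcal{P}_x,X_x)$ directly, whereas the paper integrates against an auxiliary test function $h$ in time, discretizes $\sigma_x$ over intervals $(\frac{k}{n},\frac{k+1}{n}]$, sandwiches with $\max/\min$ of $h$, passes to the limit by bounded convergence, and then upgrades from a.e.\ $t$ to all $t$ by continuity of $t\mapsto\langle\bar\mu_t,F\rangle$. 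Your shortcut is legitimate here — unlike in Lemma~\ref{lem:mcvonf}, where the thinned measure $\bar{\mathcal{P}}_x$ genuinely depends on $\sigma_x$ and the discretization does real work — because for the decoration only $X_x$ matters and it is independent of the part of the tree that determines $\sigma_x$; the only point worth making explicit is the joint measurability of $(a,\omega)\mapsto\mathbbm{1}_{\{X(a)=i\}}$ (covered by the standing regularity assumption on $X$) so that the disintegration/Fubini step is licensed. Your closing observation correctly pinpoints why one must not condition on the whole skeleton at once. The net effect is a shorter argument that yields the identity for each fixed $t$ directly, where the paper's route trades that for a more elementary, discretization-based verification.
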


\begin{proof}
Define
\[
    \bar \mu'_t(\diff a\times \{i\}) \coloneqq 
    \mathbb{E}\Big(\sum_{x} \mathbbm{1}_{\{\sigma_x<t\}} \delta_{\left(t-\sigma_x,
        X_x(t-\sigma_x)\right)}(\diff a\times \{i\}) \Big)  
\]
We need to check that $\bar \mu'_t=\bar \mu_t$ on the space of finite
measures. Let $F$ be a non-negative, bounded, continuous
function on $\mathbb{R}_+ \times \mathcal{S}$ and $h$ a non-negative,
continuous function with compact support in $\R_+$. As in
the previous lemma, we have
\begin{align*}
    \int_0^\infty  h(t) \int F(a,i) \bar \mu'_t(\diff a, \diff i) \diff t 
        &=  \sum_{x\neq \emptyset} 
        \mathbb{E}\Big(\int_0^\infty h(t)F\big(t-\sigma_x, X_x(t-\sigma_x)\big)
                \mathbbm{1}_{\{\sigma_x<t\}} \diff t  \Big) \\ 
        &+ \int_0^\infty \int_0^\infty h(t) \mathbb{E}\big(F(t+a, X(t+a)\big) g(a) \diff a \diff t.
\end{align*}
Let $(I)$ be the first term on the RHS. For every $n\in{\mathbb N}^*$
\begin{align*}
    (I) & = \sum_{k \geq 0} \sum_{x \ne \emptyset}
        \mathbb{E}\bigg( \int_{\sigma_x}^\infty
             h(t)F\big(t-\sigma_x, X_x(t-\sigma_x)\big) 
             \mathbbm{1}\Big(\sigma_x \in \Big(\frac{k}{n}, \frac{k+1}{n}\Big]\Big)
        \diff t \bigg) \\
        & = \sum_{k \geq 0} \sum_{x \ne \emptyset}
        \mathbb{E}\bigg( \int_0^\infty
            h(t+\sigma_x) F\big(t, X_x(t)\big) 
             \mathbbm{1}\Big(\sigma_x \in \Big(\frac{k}{n}, \frac{k+1}{n}\Big]\Big)
        \diff t \bigg) \\
        & \leq \sum_{k \geq 0} \sum_{x \ne \emptyset} 
        \mathbb{E}\bigg( \int_0^\infty 
            \max_{u \in (\frac{k}{n}, \frac{k+1}{n}]} h(t+u) F\big(t, X_x(t)\big) 
            \mathbbm{1}\Big(\sigma_x \in \Big(\frac{k}{n}, \frac{k+1}{n}\Big]\Big) 
        \diff t \bigg) \\
        & = \sum_{k \geq 0} \sum_{x \ne \emptyset} 
        \int_{0}^\infty \max_{u \in (\frac{k}{n}, \frac{k+1}{n}]} h(t+u)
            \mathbb{E} \bigg( F\big(t, X(t)\big) \bigg) 
            \mathbb{P} \bigg(\sigma_x\in \Big(\frac{k}{n}, \frac{k+1}{n} \Big]\bigg) 
            \diff t \\
        & = \sum_{k \geq 0} \int_{0}^\infty \max_{u \in (\frac{k}{n}, \frac{k+1}{n}]} 
            h(t+u) 
            \mathbb{E}\bigg( F\big(t, X(t) \big) \bigg)
            \bigg(B\Big(\frac{k+1}{n}\Big) - B\Big(\frac{k}{n}\Big) \bigg) \diff t.
\end{align*} 
By reasoning along the same lines as in Lemma~\ref{lem:mcvonf} (i.e.,
applying the Bounded Convergence Theorem several times), one can
show that the RHS converges to 
\[
    \int_0^\infty \int_0^\infty h(t+y) \E \Big(F\big(t, X(t)\big)\Big) b(y) \diff t \diff y 
\]
as $n\to\infty$ and thus
\begin{align*}
    \int_0^\infty h(t) \int F(a,i) \bar \mu'_t(\diff a, \diff i) \diff t 
        &\leq \int_0^\infty \int_0^\infty  h(t+y) \E \Big( F\big(t, X(t)\big)\Big) 
            b(y) \diff t \diff y \\
        &\qquad + \int_0^\infty h(t) \int_0^\infty \mathbb{E}\Big(F\big(t+a, X(t+a)\big)\Big) 
            g(a) \diff a \diff t.
\end{align*}
By a similar argument, the inequality can be strengthened into an equality.
After some simple changes of variables we get
\begin{align*}
    \int_0^\infty h(t) \int F(a,i) \bar \mu'_s(\diff a, \diff i) \diff t 
        &= \int_0^\infty h(t) \int_0^t \E \Big(  F\big(a, X(a)\big)\Big) 
            b(t-a) \diff a \diff t \\
        &\qquad + \int_0^\infty h(t) \int_0^\infty \mathbb{E}\Big(F\big(t+a, X(t+a)\big)\Big) 
            g(a) \diff a \diff t.
\end{align*}
Moreover we have
\begin{align*}
    \int_0^\infty h(t)\int F(a,i)\bar{\mu}_t(\diff a, \diff i) \diff t 
    &= \sum_{i \in \mathcal{S}} \int_0^\infty h(t) 
                \int_0^\infty F(a,i) n(t,a) p(a,i) \diff a \diff t \\
    &= \begin{multlined}[t] \sum_{i \in \mathcal{S}} \bigg[ 
            \int_0^\infty h(t) \int_0^t F(a,i) b(t-a) p(a,i) \diff a \diff t \\
            \qquad\qquad 
            + \int_0^\infty h(t) \int_t^\infty F(a,i) g(a-t) p(a,i) \diff a \diff t \bigg]
    \end{multlined}
\end{align*}
so that 
\[
    \int_0^\infty h(t) \int F(a,i) \bar{\mu}_t(\diff a, \diff i) \diff t
    = \int_0^\infty h(t) \int F(a,i) \bar{\mu}'_t(\diff a, \diff i) \diff t.
\]
It is easy to check that the two functions $t \mapsto \Angle{\bar \mu_t,F}$ 
and $t \mapsto \Angle{\bar \mu'_t, F}$ are both continuous. As a
consequence, we have $\Angle{\bar \mu_t,F} = \Angle{\bar \mu'_t, F}$ for
every test function $F$, concluding the proof.
\end{proof}

\subsection{Dual CMJ process and backward Feynman-Kac formula}
\label{SS:dualFK}

We end this section by making a connection between a dual process --
interpreted as an \emph{ancestral process} -- and the (PDE) method of
characteristics. In addition, this approach provides a probabilistic
proof of uniqueness for the PDE.

Let $\mathcal{M}$ be any random point measure with intensity measure
$\tau(\diff u)$. Fix $a,T>0$. We now construct a dual process using the
measure $\mathcal{M}$, which can be seen as a generalized Bellman-Harris
branching process (individuals have a finite lifetimes, births only occur
upon death). Let us first describe the process with no suppression (i.e.,
$c \equiv 1$).
\begin{itemize}
    \item Start with a single particle at time $t=0$. Assume that the
        residual  lifetime of this original particle is $a$, so that this
        particle dies out at time $a$.
    \item As in a Bellman-Harris process, the number of offspring of an
        individual and their lifetime durations are independent of the
        parent's characteristics.
    \item Upon death, each individual $x$ is endowed with an independent
        copy $\mathcal{M}_x$ of $\mathcal{M}$: the number of offspring of $x$
        is given by the number of atoms of $\mathcal{M}_x$ and their
        lifetime durations are given by the positions of the atoms in
        $\mathcal{M}_x$. 
\end{itemize}

The dual process with suppression $c \not\equiv 1$ can be coupled with the case
$c \equiv 1$. Given a realization of the process, if a branching occurs at time
$t$, the children are killed independently with probability $c(T-t)$.
(Note that as in the original CMJ process, suppression  translates into
trimming the dual tree.)

\begin{rem}
    We note that there are as many dual processes as there are point
    processes with intensity measure $\tau$. Here are a few natural choices:
    \begin{enumerate}
        \item Take $\mathcal{M}=\mathcal{P}$.
        \item Let $\mathcal{M}$ be a Poisson Point Process with intensity measure
            $\tau(\diff u)$. In this particular case, the dual process is a
            Bellman-Harris branching process (i.e., the offspring lifetime
            durations are independent conditional on offspring number). We note
            that $\tau(\diff u)$ appears naturally when considering the ancestral
            spine of a critical CMJ, see e.g.~\cite{schertzer2018height}. The
            measure $\tau$ can be obtained by size-biasing $\mathcal{P}$
            (i.e.\ biasing by the total mass of $\mathcal{P}$) and then
            recording the age of the individual at a uniformly chosen birth event.
    \end{enumerate}
\end{rem}
 
Let $(Y_t;\, t \le T)$ be the stochastic process valued in
$\cup_{n\in{\mathbb N}} {\mathbb R}_+^n$ recording the residual
life-times at time $t$ listed in increasing order, i.e.\ if $Y_t =
(Y_t^{(1)},\cdots, Y_t^{(n)})$ there are $n$ particles alive at time $t$
and $Y_t^{(k)}$ is the residual life-time of the $k$-th individual with
$Y_t^{(1)}<\cdots< Y_t^{(n)}$. (We assumed that $\tau$ has a density so
that the residual lifetimes are distinct a.s.) In particular, the
particle labelled $1$ at any given time $t$ will be the first to expire,
and at death time $t+Y_t^{(1)}$ a random number of children is produced.
We let $\dim(Y_t)$ denote the number of particules alive at time
$t$, i.e., the dimension of the vector $Y_t$. 
 
\begin{prop}[Backward Feynman-Kac formula]
    For any probability density $g$, we have
    \begin{equation}\label{eq:fk} 
        n(T,a) =
        \widehat{\mathbb E}_{a}\Bigg( \sum_{i\leq \dim(Y_T) }
            g(Y_T^{(i)}) \Bigg)
    \end{equation}
    where $(n(t,a);\, t, a \ge 0)$ is the unique solution to the
    McKendrick-von~Foerster equation started from $g$, and $\widehat
    {\mathbb E}_a$ is the distribution of the process $(Y_t;\, t \le T)$
    starting with an individual with residual lifetime $a$. 
\end{prop}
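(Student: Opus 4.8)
The plan is to run a first-death decomposition of the dual process and to show that the right-hand side of~\eqref{eq:fk}, regarded as a function of the horizon and of the root's residual lifetime, obeys exactly the renewal equation~\eqref{eq:renewal} that characterizes the weak solution $n$; uniqueness then closes the argument. Write
\[
    \Phi(t,a) \coloneqq \widehat{\mathbb E}_a\Big( \sum_{i \le \dim(Y_t)} g(Y_t^{(i)}) \Big),
\]
so that the goal is the identity $\Phi(T,a) = n(T,a)$. Since $g \ge 0$, every sum, expectation and integral below involves only non-negative quantities, so Tonelli's theorem applies freely and no integrability is needed to justify the interchanges.

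I would first condition on the residual lifetime $a$ of the root. If $a > t$, the root survives the whole interval $[0,t]$ without dying, produces no offspring, and at time $t$ there is a single particle with residual lifetime $a-t$; hence $\Phi(t,a) = g(a-t)$, matching the upper branch of~\eqref{eq:n}. If $a \le t$, the root dies at time $a$ and produces offspring whose number and residual lifetimes form an independent copy of $\mathcal{M}$ (intensity $\tau$), each offspring being retained according to the suppression at the death time $a$, which enters through the factor $c(t-a)$. The crucial structural observation is the branching property together with a time shift: a subtree stemming from a child born at time $a$ feels, at its own re-centred time $s$, the suppression $c(t-a-s) = c((t-a)-s)$, which is precisely the suppression of a dual process run over the horizon $t-a$. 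Consequently each retained child with residual lifetime $u$ contributes $\Phi(t-a,u)$ in expectation, and summing over the offspring through the intensity measure $\tau$ and the independent thinning yields
\[
    \Phi(t,a) = c(t-a) \int_0^\infty \Phi(t-a,u)\, \tau(u)\, \diff u, \qquad a \le t.
\]

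Setting $\beta(s) \coloneqq c(s) \int_0^\infty \Phi(s,u)\, \tau(u)\, \diff u$, the two cases become $\Phi(t,a) = g(a-t)$ for $a>t$ and $\Phi(t,a) = \beta(t-a)$ for $a \le t$, which is exactly the form~\eqref{eq:n}. Substituting this representation back into the definition of $\beta$ and splitting the $u$-integral at $u=s$ gives
\[
    \beta(s) = c(s) \int_0^s \beta(s-u)\, \tau(u)\, \diff u + c(s)\int_s^\infty g(u-s)\, \tau(u)\, \diff u,
\]
i.e.\ $\beta$ solves the renewal equation~\eqref{eq:renewal}. As $\beta$ is locally integrable --- the expected number of particles of the dual process over any finite horizon is finite because $R_0 < \infty$ --- the uniqueness of the locally integrable solution to~\eqref{eq:renewal} established above forces $\beta = b$, whence $\Phi = n$ and~\eqref{eq:fk} follows.

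I expect the delicate point to be the recursion in the second case, namely making the branching step rigorous under the time-inhomogeneous pruning. The identity $c(t-a-s) = c((t-a)-s)$ is exactly what turns the pruned subtrees into genuine copies of the dual process over the reduced horizon $t-a$, and one must also justify passing the expectation through the random offspring sum, which is a first-moment (Campbell) computation for the intensity $\tau$; here the non-negativity of $g$ and the assumption $R_0 < \infty$ do all the work. The boundary case $a>t$ and the final identification with~\eqref{eq:renewal} are then routine bookkeeping.
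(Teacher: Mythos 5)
Your argument is correct, but it takes a genuinely different route from the paper's. The paper proves the identity by a martingale argument run \emph{forward} in the dual time: it sets $Z_s = \sum_{i \le \dim(Y_s)} n(T-s, Y_s^{(i)})$, observes that $Z_s$ is constant between branching times because $n$ is conserved along characteristics, and checks that the age boundary condition $n(T-t_n,0) = c(T-t_n)\int n(T-t_n,a)\tau(\diff a)$ makes $Z$ a martingale across branching events; evaluating at $s=T$ gives \eqref{eq:fk}. You instead perform a first-death (renewal) decomposition of the dual process, show that $\Phi(t,a)$ has the form \eqref{eq:n} with $\beta$ solving the convolution equation \eqref{eq:renewal}, and conclude by the uniqueness part of the lemma of Section~\ref{SS:weakSolution}. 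Your key structural step --- that the pruning probability $c(t-a-s) = c((t-a)-s)$ turns each child's subtree into a genuine dual process over the reduced horizon $t-a$ --- is exactly right and plays the role that conservation along characteristics plays in the paper. What each approach buys: the paper's martingale proof uses only the PDE structure and needs no appeal to the renewal lemma, while yours leans on that lemma but in exchange gives a self-contained probabilistic \emph{construction} of the solution from the dual process. The one point you should tighten is the local integrability of $\beta$: finiteness of the expected particle count does not by itself control $\int_0^\infty \Phi(s,u)\tau(u)\,\diff u$ when $g$ is unbounded, since the sum weights particles by $g$ of their residual lifetimes rather than counting them. A clean fix is to truncate by generation, i.e.\ let $\Phi_n$ be the contribution of the first $n$ generations, check that the corresponding $\beta_n$ are the partial sums $\sum_{k\le n}\Phi^k F$ of the Neumann series from the lemma (which converge to $b$ in $L^{1,\delta}$), and pass to the monotone limit. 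This is a fine point rather than a flaw --- the paper's own martingale argument makes a comparable implicit integrability assumption on $Z_s$ --- but it is where your write-up is currently hand-waving.
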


\begin{proof}
Let $t_1<\cdots<t_k <\cdots$ be the successive branching times of the
dual branching process. Since $\tau$ has a density, there is a single
branching particle at the successive branching times $t_1,\dots$ Define
the process
\[
    Z_s \coloneqq \sum_{i\leq \dim(Y_s)} n(T-s,Y_s^{(i)})
\]
See also Figure~\ref{F:dual} for a pictorial representation of
the process. It is plain from the definition that $n$ is preserved along
the characteristics of the PDE, i.e., that for every $x$ the function
$s\to n(T-s, x-s)$ remains constant on $[0,x)$. As a consequence,
$(Z_s;\, s\geq0)$ remains constant on every interval $[t_{n}, t_{n+1})$,
with the convention $t_0=0$. Define $z_n \coloneqq Z_{t_n}$ the value of
the process $(Z_t;\, t\geq0)$ at the $n$-th branching time. Let
$(\mathcal{F}_{n};\, n\in{\mathbb N})$ be the filtration induced by the
process $(z_n;\, n\in{\mathbb N})$. By definition of the dual
process, we have
\[
    Y_{t_n-}^{(i)} = Y_{t_{n-1}}^{(i)} - (t_n-t_{n-1})
    = Y_{t_{n-1}}^{(i)} - Y_{t_{n-1}}^{(1)}.
\]
For every $n>1$
\begin{align*}
    \widehat{\mathbb E}_a \left( z_n \ | \ \mathcal{F}_{n-1}\right) 
    &= \sum_{2 \leq i \leq \dim(z_{n-1}) } n(T-t_{n}, Y_{t_n-}^{(i)}) 
    +  c(T-t_n) \E\big[ \Angle{\mathcal{M}, n(T-t_n, \cdot)} \big] \\
    &= \sum_{2 \leq i \leq \dim(z_{n-1}) } n(T-t_{n}, Y_{t_n-}^{(i)}) 
    +  c(T-t_n) \int_0^{\infty} n(T-t_n, a) \tau(\diff a) \\
    &= \sum_{2 \leq i \leq \dim(z_{n-1}) } n(T-t_{n}, Y_{t_n-}^{(i)}) 
    +  n(T-t_n, 0) \\
    &= z_{n-1},
\end{align*}
where the third equality follows from the age boundary of the
McKendrick-von Foerster equation, and the last equality from the identity
$n(t-s, a-s) = n(t, a)$. As already mentioned, the process
$(Z_s;\, s\geq0)$ is constant between two branching times. As a
consequence, $(Z_s;\, s\geq0)$ is a martingale (w.r.t.\ its natural
fltration) so for every $s\geq0$,
\[ 
    n(T,a) = \widehat{\mathbb E}_{a}\bigg( \sum_{i\leq \dim(Y_s)} n(T-s, Y_s^{(i)}) \bigg). 
\]
Relation~\eqref{eq:fk} follows by taking $s=T$ in the latter expression.
\end{proof}

\begin{figure}[t]
    \centering
    \includegraphics[scale=0.6]{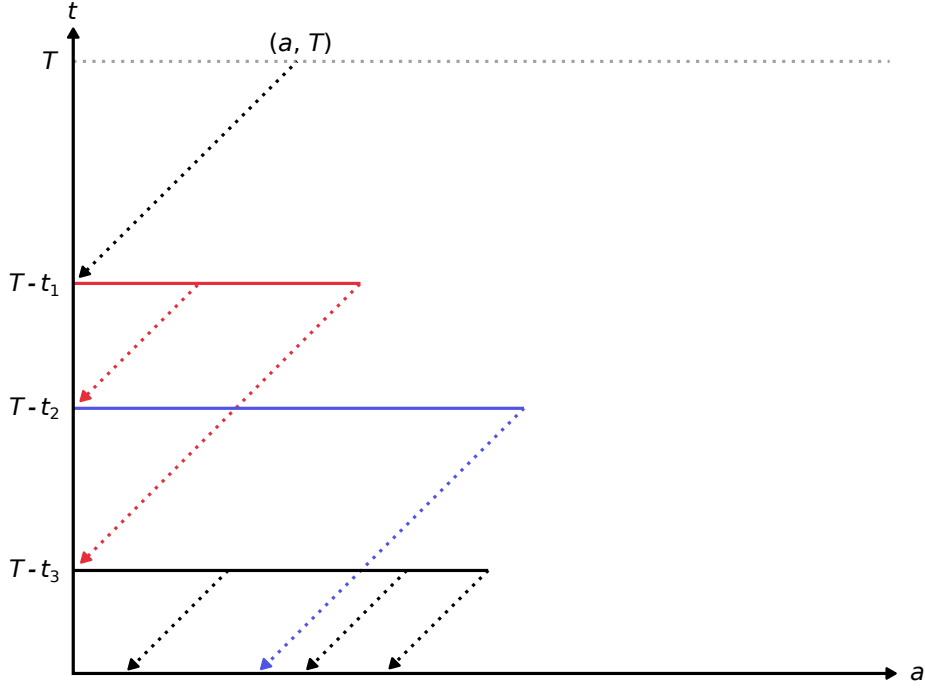}
    \caption{Graphical representation of the process $(Z_s;\, s \le T)$,
    where $s = T-t$. We start with a single individual with residual
    lifetime $a$. In this picture, time flows downwards for the branching
    process. The residual lifetime of the initial individual decreases
    linearly at speed one until reaching $0$ (this corresponds to time
    $T-t_1$ in our representation). At this time, the particle dies and
    produces 2 red particules. Residual lifetimes travel along the
    characteristics of the McKendrick-von Foerster PDE until reaching the
    spatial  boundary condition $\{a=0\}$ where a new branching occurs.}
    \label{F:dual}
\end{figure}

\section{Proofs of the main results}
\label{S:proofs}

In this section, we provide the proofs of the two laws of 
large numbers stated in Section~\ref{SS:mainResults}. The proof of
Theorem~\ref{thm:largePop} is a direct consequence of the results of the
previous section.

\begin{proof}[Proof of Theorem~\ref{thm:largePop}]
    Recall the definition of the empirical measure
    $\mu^N_t$. It can be written as
    \begin{equation}\label{eq:empirical-measure}
        \mu^N_t(\diff a \times \{i\}) = 
        \frac{1}{N} \sum_{k=1}^N \mu^{1,(k)}_t (\diff a \times \{i\})
    \end{equation}
    where $(\mu^{1,(k)}_t;\, k \ge 1)$ are independent copies of
    $\mu^{1}_t$. Let $f$ be an arbitrary continuous and bounded function
    on $\mathbb{R}_+\times \mathcal{S}$. The law of large numbers
    combined with Corollary~\ref{cor:forwardFK} implies that
    \[
        \Angle[\big]{\frac{1}{N} \mu^N_t, f} \longrightarrow  
        \Angle{\bar \mu_t, f} \quad \text{a.s.}
    \]
    which ends the proof.
\end{proof}

We now turn our attention to the proof of Theorem~\ref{thm:singlePop},
the law of large number started from a single individual. Let us briefly
recall the setting of this result.

We consider a sequence $( t_K;\, K \ge 0)$ of random times with $t_K \to
\infty$ a.s.\ on the non-extinction event.
We assume that the process starts from a single individual infected at
time $t=0$ and that the contact rate $c^K$ of the CMJ depends on $K$ in
the following way: $c^K(t) \coloneqq C(t-t_K)$, where $(C(t);\, t \in
\R)$ is a piecewise continuous function in $[0,1]$ such that $C(t)=1$ for
all $t < 0$.

\begin{proof}[Proof of Theorem~\ref{thm:singlePop}]
The result will follow by viewing the population at time $t_K+t$ as an
adequate random characteristic of the population at time $t_K$. Let us
recall some basic facts about random characteristics of CMJ processes
in our context. We refer to \cite{Jagers1984, Taib1992} for a more
thorough account on this notion.

We consider a plain CMJ with no contact rate or initial shifted law.
Every individual is characterized by an independent pair of random
variables $(\mathcal{P}_x, X_x)$. A random characteristic is a
real-valued stochastic process $(\chi(a);\, a \ge 0)$ that can be
constructed from the collection $(\mathcal{P}_x, X_x;\, x \in \mathscr{U})$.
(More formally it is a càdlàg process measurable w.r.t.\ the
$\sigma$-field induced by these variables.) By convention, it is extended
to a process defined on the whole real line by setting $\chi(a) \coloneqq
0$ for $a < 0$. 

For an individual $x \in \mathscr{U}$ let us write $\chi_x$ for the random
characteristic constructed from the collection $(\mathcal{P}_{xy},
X_{xy};\, y \in \mathscr{U})$. It is the characteristic constructed from the
tree rooted at $x$ of all descendants of $x$. The branching process
counted by the random characteristic $\chi$ is then defined as 
\[
    Z^\chi(t) = \sum_{x \in \mathscr{U}} \chi_x(t-\sigma_x).
\]
We now recall one of the main results of Jagers and Nerman
\cite{Jagers1984a}, namely Theorem~5.8 (see also Theorem~4, Appendix~A in
\cite{Taib1992}). Recall that $\alpha$ denotes the Malthusian parameter
defined in \eqref{eq:malthusian}. On top of all the assumptions above, we
make the two following extra assumptions 
\begin{enumerate}
    \item[(a)] The characteristic fulfills
        \[
            \sum_{n \ge 0} \sup_{n \le u \le n+1} e^{-\alpha u} 
            \E( \chi(u) ) < \infty.
        \]
    \item[(b)] The map $a \mapsto \mathbb{E}(\chi(a))$ is continuous a.e.\ with
        respect to the Lebesgue measure.
\end{enumerate}
Then there exists a positive r.v.\ $W_\infty$ (independent of the choice
of $\chi$) such that conditional on non-extinction 
\[
    Z^{\chi}(t)\exp(-\alpha t)  \to 
    W_\infty  \int_0^\infty \alpha e^{-\alpha a} \mathbb{E}(\chi(a)) \diff a 
    \quad \text{in probability as $t \to \infty$.}
\]

To illustrate the method, we recall that if we take 
$\chi(a)= \mathbbm{1}_{{\mathbb R}_+}(a)$ then $Z^{\chi}(t)$ coincides
with $Z(t)$, the total number of births before time $t$. For this
particular choice of (deterministic) characteristic, the two properties
above are immediately satisfied (recall that $\alpha > 0$), so that
conditional on non-extinction
\begin{equation*}
    \sum_{x} \mathbbm{1}_{\{\sigma_x<t\}} \exp(-\alpha t) \to W_\infty
    \quad  \text{in probability}.
\end{equation*}
This convergence ensures that the first item of our theorem is satisfied.

To prove the second item, let us set
\[
    \mathcal{P}_\emptyset = \sum_{i = 1}^{N_\emptyset} \delta_{A_i},
\]
for the atoms of the infection point process of the ancestor
$\mathcal{P}_\emptyset$. Further denote by 
$\mu^{(i)}_t$ the empirical measure of ages and compartments at time $t$
of the progeny of the $i$-th child of the ancestor, thinned by the
contact rate $(c(t);\, t \ge 0)$, i.e.,
\[
    \forall t \ge 0,\quad \mu^{(i)}_t 
    = \sum_{x \in \mathscr{U}} \mathbbm{1}_{\{ \widetilde{\sigma}_{ix} < t\}} 
        \delta_{(t - \widetilde{\sigma}_{ix}, X_{ix}(t-\widetilde{\sigma}_{ix}))},
\]
where $\widetilde{\sigma}_x$ refers to the infection time of $x$ once the
tree has been thinned. (That is, $\widetilde{\sigma}_x = \sigma_x$ if
$x$ remains infected after the thinning, or $\widetilde{\sigma} = \infty$
otherwise.) Our characteristic of interest can now be defined as
\[
    \chi^{(t,f)}(a) = f(a+t, X_\emptyset(a+t)) 
    + \sum_{i = 1}^{N_\emptyset} \mathbbm{1}_{\{A_i \in [a, a+t]\}} \Angle{\mu^{(i)}_{a+t}, f}.
\]
for a fixed time $t \ge 0$ and a fixed bounded continuous function $f$.
On the one hand, it should be clear that 
\begin{equation}
    Z^{\chi^{(t,f)}}(t_K) \overset{\mathrm{(d)}}{=} \Angle{\mu^K_{t_K+t}, f}.
\end{equation}
To see this, note that the process $Z^{\chi^{(t,f)}}$ is obtained from a
plain CMJ with no thinning, so that only infections after time $t_K$ are
removed due to the contact rate. 

On the other hand, $\chi^{(t,f)}(a)$ can be obtained by starting from an
initial individual with age $-a$, removing all atoms from its infection
point process before time $0$, and integrating the empirical measure of
the resulting CMJ at time $t$ against $f$. This is the description of the
CMJ with initial shifted law $\tilde{\mathscr{L}}$ conditional on $A =
a$, so that
\begin{equation} \label{eq:characShiftedLaw}
    \E\Big( \int_0^\infty g(a) \chi^{(t,f)}(a) \diff a \Big) =
    \Angle{\bar{\mu}_t, f},
\end{equation}
where in $\bar{\mu}_t$ the age of the initial ancestor is distributed
according to $g$.

Therefore, up to checking (a) and (b), Theorem~5.8 in
\cite{Jagers1984} shows that, as $K \to \infty$,
\[
    \Angle{\mu^K_{t_K+t}, f} \longrightarrow W_\infty \E\Big( \int_0^\infty \alpha
    e^{-\alpha a} \chi^{(t,f)}(a) \diff a\Big) 
\]
in probability, which in combination with \eqref{eq:characShiftedLaw}
proves the result.

All what remains to be shown is that (a) and (b) are fulfilled. That $a
\mapsto \E(\chi^{(t,f)}(a))$ is continuous a.e.\ follows directly from
the fact that $\tau$ has a density. Condition (b) is a consequence of the
following stochastic domination 
\[
    \chi^{(t,f)}(a) \le 
    \norm{f}_{\infty} \Big(1 + \sum_{i = 1}^{N_\emptyset} Z^{(i)}(t) \Big)
\]
where $(Z^{(i)}(t);\, t \ge 0)$ are i.i.d.\ copies of the CMJ without
thinning, independent of $\mathcal{P}_\emptyset$.
\end{proof}

\section{Inference procedure}  \label{S:inference}

In this section, we illustrate how to use our framework to make
inferences from macroscopic observables of the epidemic, e.g., incidence
of positively tested patients, hospital or ICU (intensive care unit)
admissions, deaths, \textit{etc}. We show how to use those observables to
extract the underlying age structure of the population, estimate model
parameters, and forecast the future of the epidemic.

We focused on a longitudinal case study in France. From March 18 2020,
the French government has provided daily reports of the numbers of ICU
and hospital admissions, of deaths, of discharged patients, and of
occupied ICU and hospital beds. Moreover, several theoretical studies
have already been conducted on the same dataset. This allowed us to fix
the values of some crucial biological parameters that had already been
estimated and to carry out a comparison with our method. We want to
emphasize that the aim of this section is to provide a mathematical
framework in which convergence results can be rigorously proved while
remaining flexible enough for other applications. Our goal is not to
provide new estimates of epidemiological parameters for France, as many
robust estimates are already available. For instance we do not provide
confidence intervals for our estimates, and neither do we conduct a
sensibility analysis.

The remainder of the section is laid out as follows. In
Section~\ref{SS:model} we identify the mathematical quantities that
impact the dynamics of the epidemic for large population sizes, and show
how to turn them into a likelihood. Section~\ref{SS:parametrization} then
presents the choice of distribution we made for these quantities and the
parameters that need to be estimated. Finally, estimation of these
parameters from the French incidence data is performed in
Section~\ref{SS:admission} and Section~\ref{SS:occupancy}. We start by
fitting a simple model in Section~\ref{SS:admission} and then show how
this model can be made more complex to account for more complex data in
Section~\ref{SS:occupancy}.

\subsection{Deriving the likelihood} \label{SS:model}

Under the assumptions of Theorem~\ref{thm:singlePop}, the number of
individuals in a given state $i$ at time $t$ converges to 
\begin{equation} \label{eq:prediction}
    n_i(t) = \int_0^\infty n(t,a) p(a,i) \diff a,
\end{equation}
where $(n(t,a);\, t,a \ge 0)$ is the solution to \eqref{eq:McKVF}. The
required assumptions are in essence that the epidemic has been ongoing
for a long enough time at the lockdown onset for the infected population
to be large, which we assume to hold true for France as the number of
cases on March 16 2020 was on the order of thousands of
individuals.

Therefore, we take \eqref{eq:prediction} as the predicted number of
individuals in state $i$ in our model. In order to turn
\eqref{eq:prediction} into a likelihood, we assume that the observed
number of individuals in state $i$ at time $t$ is distributed according
to some discrete law centered on the predicted value, which we take to be
a Poisson distribution. Then, the likelihood for the whole time period is
obtained by assuming that the observations are independent across states
and time. The explicit expression for the likelihood is provided in
Section~\ref{A:likExpression}. 

\begin{rem}
    The assumption that observations are independent is obviously not
    met. For instance, the number of occupied hospital beds is
    cumulative, so that any error is propagated from one day to the
    other. Moreover, there is a clear weekly effect in data that is not
    accounted for here. As deriving robust estimates is not the main
    purpose of this work, we prefer to keep this independence assumption
    that leads to simple expressions for the likelihood, while being
    aware of its limitation. This assumption could be relaxed by modeling
    explicitly the observation process and its potential errors.
\end{rem}

\begin{rem}
    We have decided to use an expression for the likelihood similar to
    that in \cite{salje_estimating_2020, sofonea_epidemiological_2020}.
    Such an expression only poorly accounts for the deviation of the
    stochastic model from its deterministic limit. Better accounting for
    this effect would require to use the likelihood of the stochastic
    model, or a Gaussian approximation of it obtained by deriving a
    functional central limit theorem. In our context the covariance
    structure of the population could be obtained by adapting the
    expressions in \cite[Section~3]{Jagers1984a} to incorporate the
    contact rate $(c(t);\, t \ge 0)$. However the resulting expressions
    would be quite cumbersome and computationally costly so that we
    prefer to use our simpler Poisson likelihood.
\end{rem}

Under our assumptions, the likelihood only depends on
\eqref{eq:prediction}, which is in turn determined by four quantities
that need to be parametrized:
\begin{enumerate}
    \item The intensity measure of the infection point process
        $(\tau(a);\, a \ge 0)$.
    \item The initial number of infected individuals and their age
        profile.
    \item The contact rate after lockdown $(c(t);\, t \ge 0)$.
    \item The one-dimensional marginals of the life-cycle process 
        $(p(a,i);\, a \ge 0)$ for $i \in \mathcal{S}$.
\end{enumerate}

\subsection{Parametrization of the model} \label{SS:parametrization}

\paragraph{Average infection measure.}
Recall the definition of $\tau$ and $R_0$ from
Section~\ref{SS:assumption}
and further define
\[
    \forall a \ge 0,\quad \hat{\tau}(a) = \frac{\tau(a)}{R_0}.
\]
The total mass of $\tau$, $R_0$, corresponds to the mean number of
secondary infections induced by a single infected individual if $c \equiv 1$. 
Thus $R_0$ is the basic reproduction number at the start of the epidemic,
when no control measure is enforced. In order to distinguish it from the
reproduction number during lockdown, it will be denoted by 
$R_{\mathrm{pre}}$. We leave it as a parameter to infer.

The function $(\hat{\tau}(a);\, a \ge 0)$ is the density of a probability
measure known as the generation time distribution \cite{wallinga07,
britton19}. This distribution has been estimated shortly after the
epidemic onset by several studies \cite{ferretti_quantifying_2020,
ganyani20, cereda2020early}. 
We use the estimation of \cite{ferretti_quantifying_2020}, and assume
that $\hat{\tau}$ is a Weibull distribution, that is 
\begin{equation} \label{eqn:tauhat}
    \forall a \ge 0,\quad \hat{\tau}(a) = 
    \frac{k}{\lambda} \Big(\frac{a}{\lambda}\Big)^{k-1} e^{-(a/\lambda)^k},
\end{equation}
where the values of the shape parameter $k$ and scale parameter $\lambda$
are recalled in Table~\ref{T:both}.

\paragraph{Initial condition.} According to Theorem~\ref{thm:singlePop},
the initial age structure of the population is 
\[
    \forall a \ge 0,\quad n(0, a) = W \alpha e^{-\alpha a},
\]
where $\alpha$ is the Malthusian parameter of the epidemic prior to
implementation of control measures, and $W$ is the number of infected
individuals at $t=0$, that is, at the lockdown onset. The parameter
$\alpha$ corresponds to the exponential growth rate of \emph{any}
observable of the epidemic during this period. We chose to estimate
it from the cumulative number of deaths, which appeared to be more
reliable than the number of detected cases as the number of tests
conducted in the early phase of the epidemic in France varied greatly.
It was estimated using a linear regression on the logarithm of the number
of deaths from March 7 to March 20 2020, and the corresponding basic
reproduction number before lockdown, $R_{\mathrm{pre}}$, was computed
using the Euler-Lotka equation \eqref{eq:malthusian} assuming that the
generation time distribution is given by \eqref{eqn:tauhat}. Both
estimates are shown in Table~\ref{T:both}.

{\renewcommand{\arraystretch}{1.5}
\begin{table}
    \centering
    \begin{tabular}{c p{8cm} c c}
        \toprule
        Notation &
        \multicolumn{1}{c}{Description} & Value & Source\\
        \midrule
        $\alpha$ & Pre-lockdown exponential growth rate & 0.315 & E\\
        $R_\mathrm{pre}$ & Basic reproduction number before lockdown & 3.25 & E\\
        $k$ & Shape parameter of the generation time & 2.83 
            & \cite{ferretti_quantifying_2020}\\
        $\lambda$ & Scale parameter of the generation time & 5.67
                  & \cite{ferretti_quantifying_2020}\\
        \bottomrule
    \end{tabular}
    \caption{Parameter values common to both models. In the ``Source''
        column, ``E'' indicates that the parameter has been estimated in the
        present work.}
    \label{T:both}
\end{table}
}

\paragraph{Contact rate.} The contact rate 
$(c(t);\, t \ge 0)$ accounts for the temporal variations in transmissions
after the lockdown onset. As we focus on the period from March to May
2020 where no additional control measure has been enforced in France, we
will assume that $(c(t);\, t \ge 0)$ is constant and denote by $c_0$ its
value, that is, $c \equiv c_0$. The reproduction number after the
lockdown is denoted by $R_\mathrm{post} \coloneqq c_0 R_\mathrm{pre}$.

\paragraph{Life-cycle.} The last quantities that need to be defined are
the one-dimensional marginals of the life-cycle process $(X(a);\, a \ge
0)$. These could be directly estimated from hospital patient pathways as
in \cite{Linton2020, verity_estimates_2020, lefranck21}. However, when
such data is not available they need to be estimated from individual
counts in each compartment. In this case, we propose the following
parametrization of the process $(X(a);\, a \ge 0)$ based on
Gamma-distributed sojourn times. 

Let us denote by $(X_n;\, n \ge 0)$ the sequence of states visited by
$(X(a);\, a \ge 0)$. We assume that $(X_n;\, n \ge 0)$ is a Markov chain
on $\mathcal{S}$, and that it ends either in a ``dead'' or ``recovered''
state, that are assumed to be absorbing. 

For $i \in \mathcal{S}$, the sojourn time in $i$ is supposed to be
Gamma-distributed with mean $m_i$ and variance $m_i / \gamma$, for some
global dispersion parameter $\gamma$ shared across all states. More
precisely, let $(D_n;\, n \ge 0)$ denote the sequence of sojourn times of
$(X(a);\, a \ge 0)$, that is, $D_n$ is the sojourn time in state $X_n$.
We assume that conditional on $(X_n;\, n \ge 0)$, the variables $(D_n;\,
n \ge 0)$ are independent. Moreover, if $X_n = i_n$, then $D_n$ follows a
Gamma distribution with mean $m_{i_n}$ and variance $m_{i_n}/\gamma$,
that is,
\[
    D_n \sim \frac{\gamma^{\gamma m_{i_n}}}{\Gamma(\gamma m_{i_n})} 
    u^{\gamma m_{i_n} - 1} e^{-\gamma u} \diff u.
\]

Thus, the one-dimensional marginals are parametrized by the transitions
of a Markov chain $(X_n;\, n \ge 0)$ on $\mathcal{S}$, as well as by one
parameter $m_i$ for each $i \in \mathcal{S}$, and a global parameter
$\gamma$. Under this parametrization the one-dimensional marginals can be
efficiently computed, while only requiring one parameter for the sojourn
time in each state. Two concrete examples of Markov chains $(X_n;\, n \ge
0)$ are discussed in the next sections.

\subsection{Inference with the admission model} \label{SS:admission}

The first model that we consider, the admission model, is a parsimonious 
model designed to obtain estimates of the reproduction number during
lockdown, and of the number of infections in France in early March. 
It is illustrated in Figure~\ref{F:admissionModel}. We fit it to the three
``incidence'' time series: the daily number of admissions in hospital and
ICU, and the daily number of deaths.

\paragraph{Description of the model.}
Upon infection, with probability $1-p_\mathrm{hosp}$, an individual
develops a mild form of  COVID-19 and is placed in state $I$, which
encompasses all cases that do not require a hospitalization. With
probability $p_{\mathrm{hosp}}$ the individual has a severe infection and
is placed in state $C$. Individuals in state $C$ are eventually
hospitalized and moved to state $H$. Then, with probability $p_{\icu}$
individuals in state $H$ are admitted in ICU and moved to state $U$.
Otherwise they eventually recover and are discharged. Finally,
individuals in state $U$ die with probability $p_{\mathrm{death}}$, or
recover with probability $1-p_{\mathrm{death}}$. In this model, only
individuals in ICU may die.

\begin{figure}
    \begin{center}
        \includegraphics[width=.95\textwidth]{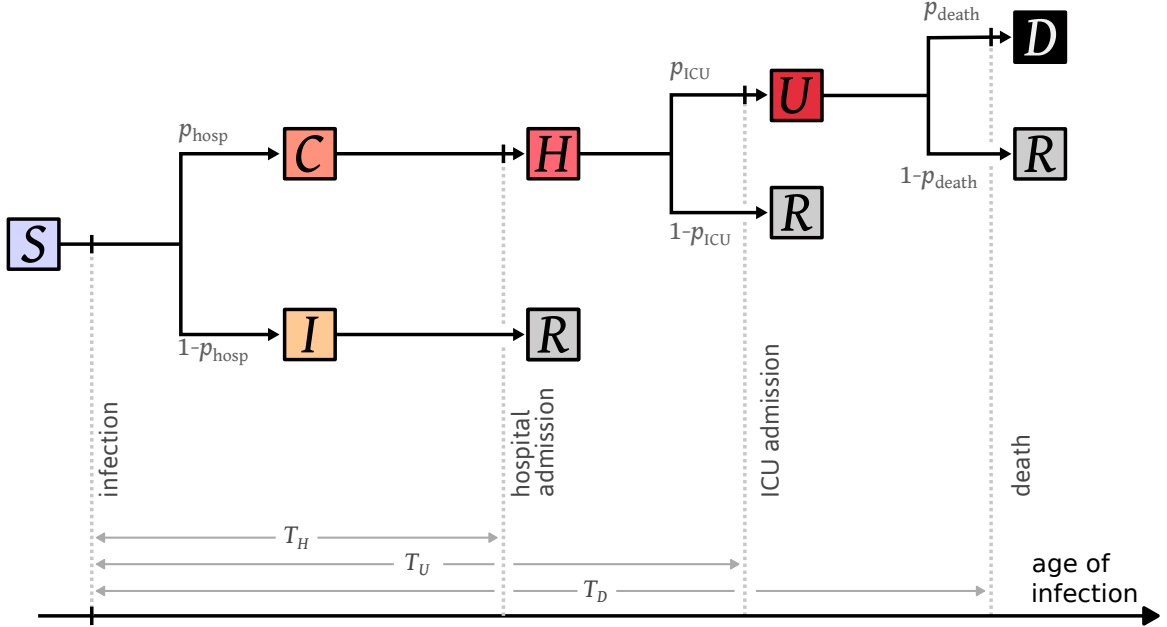}
    \end{center}
    \caption{Illustration of the admission model.}
    \label{F:admissionModel}
\end{figure}

As we are fitting the number of individuals that enter a state, and not
the number of individuals that are currently in that state, we only need
to track the times $T_H$, $T_U$, and $T_D$ elapsed between infection and
hospital admission, ICU admission, and death, respectively.

\paragraph{Inference results.}
Estimations of $p_\hosp$, $p_\icu$ and of the death probability
conditional on hospitalization (equal in our setting to $p_\icu \times
p_{\mathrm{death}}$)  in France have already been conducted
in~\cite{salje_estimating_2020}. We used these estimates and considered
the values of $p_\hosp$, $p_\icu$, and $p_\mathrm{death}$ as fixed. All
other parameters were estimated using a maximum likelihood procedure
described in Section~\ref{A:likExpression}. The parameter estimations are
provided in Table~\ref{T:admission}, and the corresponding predicted
values for the time series under consideration are displayed in
Figure~\ref{F:admissionFit}. Overall, our simple model seems to match the
observed data. Note however that the model overestimates the number of
ICU admissions in the second part of the lockdown. This is likely due to
a temporal reduction in the ICU admission probability which has been
reported in~\cite{salje_estimating_2020}.

Our estimation of the basic reproduction number during the lockdown
period is $R_{\mathrm{post}} = 0.745$. This suggests that lockdown has
reduced the basic reproduction number by a factor $c_0 = 0.23$ compared to the
beginning of the epidemic. Moreover, we estimated that $W = 9.85 \times
10^5$ infections have occurred in France before March 17. Both these
values are in line with previous estimates for
France~\cite{sofonea_epidemiological_2020, salje_estimating_2020}. 

We did not impose that $T_H < T_U$ in the inference procedure.
Interestingly we found that the data are best explained by assuming that 
the mean of $T_H$ is $14.4$ days, whereas the mean of $T_U$ is $11.4$
days. This indicates that the delay between infection and hospital
admission is shorter for individuals that end up in ICU, compared to the
average time between infection and hospitalization. Therefore it would be
more appropriate to allow individuals to have an admission to hospital
delay that is different depending on whether they will end up in ICU or
not, modeling the fact that they have a more severe form of the disease.
We estimated the mean of $T_D$, the time between infection and death, to
be $18.6$ days. This estimate is lower than but consistent with
previous estimates based on the study of individual-case
data~\cite{Wu2020, Linton2020, verity_estimates_2020}.

{
\renewcommand{\arraystretch}{1.5}
\begin{table}[t]
    \centering
    \begin{tabular}{c p{8cm} c c}
        \toprule
        Notation & \multicolumn{1}{c}{Description} & Value & Source\\
        \midrule
        $R_{\mathrm{post}}$ & Reproduction number during lockdown &
        $0.745$ & E\\
        $W$ & Total number of infections before March 17 2020 & $9.85 \times
        10^5$ & E\\
        $p_{\mathrm{hosp}}$ & Probability of being hospitalized & $0.036$
                            & \cite{salje_estimating_2020}\\
        $p_{\mathrm{ICU}}$ & Probability of entering ICU conditional on
        being at the hospital & $0.19$ & \cite{salje_estimating_2020}\\
        $p_\mathrm{ICU}\cdot p_{\mathrm{death}}$ & Death probability
        conditional on being hospitalized& $0.181$ &
        \cite{salje_estimating_2020}\\
        $T_H$ & Delay between infection and hospital admission& $14.4$
        days & E\\
        $T_U$ & Delay between infection and ICU admission& $11.4$ days &
        E\\
        $T_D$ & Delay between infection and death & $18.6$ days & E\\
        $\gamma$ & Scale parameter common to all Gamma distributions &
        0.463 & E\\
        \bottomrule
    \end{tabular}
    \caption{Inferred parameter set for the admission model.
        The values indicated for the durations correspond to the means of
        the Gamma distributions. In the ``Source'' column, ``E''
        indicates that the parameter has been estimated in the current work.}
    \label{T:admission}
\end{table}
}

\subsection{Inference with the occupancy model} \label{SS:occupancy}

We now consider a model aimed at providing predictions for the number of
hospitalized individuals and ICU patients. The model is fitted to the
three ``incidence'' time-series, and to three additional ``prevalence'' 
time-series: the number of occupied hospital and ICU beds, and the number
of discharged hospital patients.

A first attempt to fit the prevalence curves could be to keep the
admission model of Figure~\ref{F:admissionModel} and to estimate the time
between hospital admission and discharge using the observed number of
occupied ICU, hospital beds, and discharged patients. However this only
yields a poor fit of the data (see Section~\ref{A:poorFit}). We
identified two main reasons for this discrepancy. First, we assumed that
all individuals are admitted to ICU prior to death. Using the probability
estimated in~\cite{salje_estimating_2020} then yields that the
probability of dying conditional on being in ICU is $0.953$. This value
is unrealistically high, and we need to assume that a fraction of
hospital deaths occur without going through the ICU. Second, under the
admission model, the delay between hospital admission and discharge is
almost unimodal. However, the observed number of occupied hospital beds
rises fast but falls slowly. Such a shape cannot be easily accounted for
by a unimodal distribution for the time spent in hospital. 

\paragraph{Description of the model.}
Taking into account the previous two points required us to make the model
more complex. The resulting model, referred to as the \emph{occupancy model}, is
illustrated in Figure~\ref{F:occupancyModel}. We now consider that upon
infection, individuals go to one of three states depending on the
severity of their infection:
\begin{itemize}
    \item The state $C_u$ which gathers critical infections that lead to death
        or ICU admission. The probability of having a critical 
        infection is denoted by $p_\mathrm{crit}$.
    \item The state $C_h$ which corresponds to severe infections that require a
        hospitalization but are not critical. Such infections occur with 
        probability $p_\mathrm{sev}$.
    \item The $I$ state which consists of all mild infections that do not
        lead to a hospital admission, and occur with probability
        $1-p_\mathrm{crit}-p_\mathrm{sev}$.
\end{itemize}

Individuals in state $C_h$ are admitted to hospital after a duration
$D_{C_h}$. Then, with probability $p_{\mathrm{short}}$ they are discharged after a duration $D_\mathrm{short}$, while with probability
$1-p_\mathrm{short}$ they are discharged after a duration $D_\mathrm{long}$.

Critically infected individuals are admitted to hospital after a duration
$D_{C_u}$. Upon arrival at hospital, they die immediately with probability
$d_\mathrm{hosp}$, or go to ICU after a duration $D_{H_u}$. Individuals in ICU
die with probability $d_\mathrm{ICU}$ after a delay $D_D$. Otherwise they
are discharged after a stay of length $D_U$.

\begin{figure}
    \begin{center}
        \includegraphics[width=.95\textwidth]{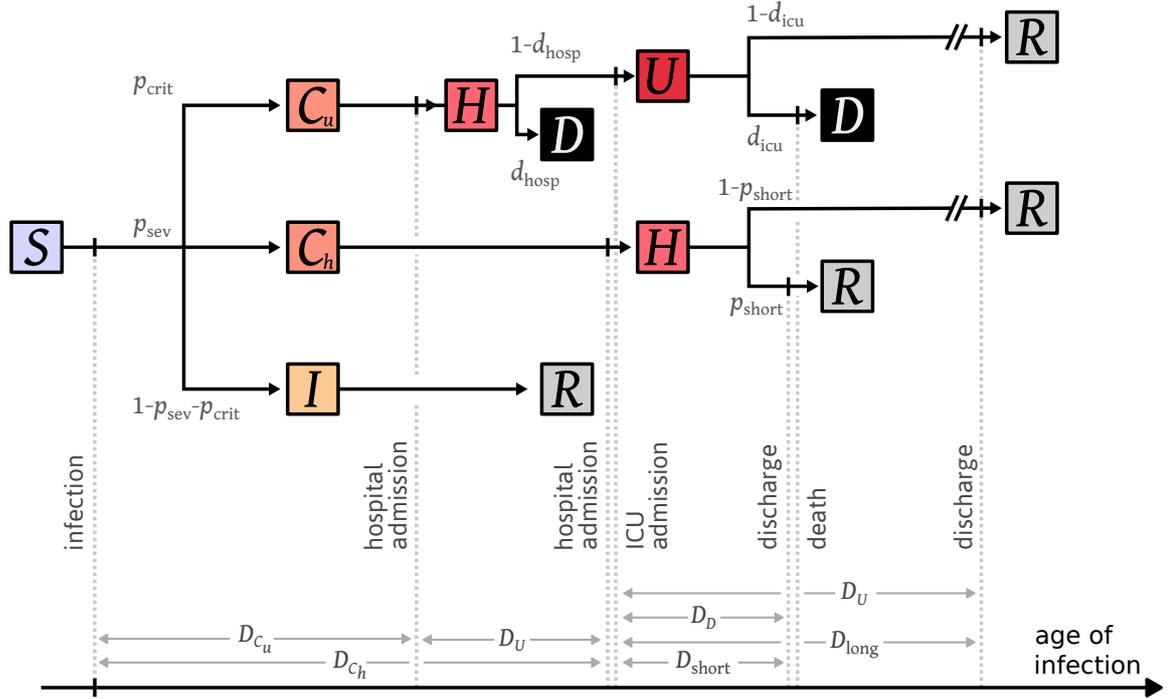}
    \end{center}
    \caption{Illustration of the occupancy model}
    \label{F:occupancyModel}
\end{figure}

\paragraph{Inference results.}
In our model, the probability of hospital admission is
$p_\mathrm{crit}+p_\mathrm{sev}$, the probability of ICU admission is
$p_\mathrm{crit}(1-d_\mathrm{hosp})$ and that of death is
$p_\mathrm{crit}(d_\mathrm{hosp} + (1-d_\mathrm{hosp})d_\mathrm{ICU})$.
We have fixed these three values to those estimated
in~\cite{salje_estimating_2020}, and we only had one remaining
parameter out of 4 ($p_\mathrm{crit}$, $p_\mathrm{sev}$,  $d_\mathrm{short}$, $d_\mathrm{ICU}$) to estimate from the data. We have fixed the time $D_U$ to
$1.5$ days as estimated in~\cite{salje_estimating_2020}. All other
parameters were estimated using a maximum likelihood method described in
Section~\ref{A:likExpression}. The estimated parameter set is shown in
Table~\ref{T:occupancy}, while Figure~\ref{F:occupancyFit} shows the
best-fitting model. 

The estimated parameters provide a good fit of the six observed
time series. Again, the model has a tendency to overestimate the ICU
admissions in the second part of the lockdown, which has the same
interpretation as for the admission model.

Under the occupancy model, we estimated that $R_{\mathrm{post}} = 0.734$,
and $W = 9.52\times 10^5$. These estimates are extremely close to those
made with the admission model. The estimated mean time between infection
and death or hospital, ICU admission are respectively $19.5$ days, $13.7$
days and $12.5$ days. Again we see that these estimates in the more
complex model are consistent with those of the simple model. The mean
recovery time from hospital is $19.4$ days for severe infections, and
$28.2$ days for critical infections. This yields an overall mean recovery
time of $20.0$ days. Finally, we estimated that the death probability
conditional on being in ICU is $0.709$. This yields that in our model a
fraction $0.256$ of all deaths occur shortly after hospital admission.
This result is consistent with~\cite{salje_estimating_2020} that
estimated that a fraction $0.15$ of all deaths occurred within the first
day after hospital admission. However, it has been reported
in~\cite{rapport_hebdo} that the death probability of ICU patients is
$0.23$. Our estimated value is thus unrealistically high. This indicates
that there is a fraction of hospital deaths that occur without any ICU
admission, and not quickly after hospital admission, that our model is
not accounting for.

{
\renewcommand{\arraystretch}{1.5}
\begin{table}
    \centering
    \begin{tabular}{c p{8cm} c c}
        \toprule
        Notation & \multicolumn{1}{c}{Description} & Value & Source\\
        \midrule
        $R_{\mathrm{post}}$ & Reproduction number during lockdown & $0.734$ & E\\
        $W$ & Total number of infections before March 17 2020 & $9.52 \times
        10^5$ & E\\
        $p_\mathrm{crit}+p_\mathrm{sev}$ & Probability of being
        hospitalized & $0.036$ & \cite{salje_estimating_2020}\\
        $\frac{p_\mathrm{crit} (1-d_\mathrm{hosp})}{p_\mathrm{crit}+p_\mathrm{sev}}$ 
        & Probability of entering ICU conditional on being at the
        hospital & $0.19$ & \cite{salje_estimating_2020}\\
        $\frac{d_\mathrm{hosp}+(1-d_\mathrm{hosp})d_\mathrm{ICU}}%
        {1+p_\mathrm{sev}/p_\mathrm{crit}}$ 
        & Death probability conditional on being hospitalized& $0.181$ &
        \cite{salje_estimating_2020}\\
        $d_\mathrm{ICU}$ & Probability of death conditional on being in
        ICU& $0.709$ & E\\
        $p_\mathrm{short}$ & Probability of a short stay at hospital & 
        $0.701$ & E\\
        $D_{C_h}$ & Delay between severe infection and hospital
        admission& $14.5$ days & E\\
        $D_\mathrm{short}$ & Delay between hospital admission and quick
        discharge& $7.36$ days & E\\
        $D_\mathrm{long}$ & Delay between hospital admission and slow
        discharge& $47.5$ days & E\\
        $D_{C_u}$ & Delay between critical infection and hospital
        admission& $11.0$ days & E\\
        $D_H$ & Delay between hospital admission and ICU admission &
        $1.5$ days & \cite{salje_estimating_2020}\\
        $D_U$ & Delay between ICU admission and discharge& $28.2$ days
              & E\\
        $D_D$ & Delay between ICU admission and death & $9.90$ days & E\\
        $\gamma$ & Scale parameter common to all Gamma distributions &
        $0.316$ & E\\
        \bottomrule
    \end{tabular}
    \caption{Inferred parameter set for the occupancy model.
        The values indicated for the durations correspond to the means of
        the Gamma distributions. In the ``Source'' column, ``E''
        indicates that the parameter has been estimated in the current work. }
        \label{T:occupancy}
\end{table}
}

Our estimates, though they are not the key message of the present paper,
can nevertheless draw attention to potential heterogeneities in the
infected population. We estimated that the mean time between infection
and ICU admission is shorter than that between infection and hospital
admission. This suggests that the time between infection and severe
symptom onset is shorter for critical infection, that lead to ICU
admission, than for milder ones. Moreover, fitting the prevalence time
series required to divide the hospital and death compartments in two
subcompartments, indicating that the data are not well explained by a
simple homogeneous model, as seen in Figure~\ref{F:poorFit}. Such heterogeneity
could originate from underlying structuring variables, such as
comorbidity or (actual) age, that we are not accounting for. Many estimates
of clinical features, such as the incubation period, are obtained from a
pooled dataset that does not take heterogeneity in the population into
account~\cite{Backer2020, Linton2020, Lauer2020, Tindale2020, Bi2020,
massonnaud_covid_2020, djidjou_optimal_2020}. When estimating the total
number of infected individuals using only a fraction of the detected
cases, e.g., using the hospital admissions or deaths, it is interesting
to keep in mind that the time periods estimated from pooled studies
could be inaccurate for the fraction of infected individuals under
consideration.

\subsection*{Acknowledgements}

The authors thank the \emph{Center for Interdisciplinary Research in
Biology} (CIRB) for funding and the taskforce MODCOV19 of the INSMI
(CNRS) for their technical/scientific support during the epidemic. P.C.
has received funding from the European Union's Horizon 2020 research and
innovation program under the Marie Sk{\l o}dowska-Curie grant agreement
PolyPath 844369.


\bibliographystyle{plain}
\bibliography{ms_jomb_covid.bib}

\appendix

\section{Appendix}

\subsection{Numerical simulation of the PDE} \label{A:numerical}

Computing the likelihood of our model requires to obtain an expression
for the solution $(n(t,a);\, t,a \ge 0)$ of the PDE \eqref{eq:McKVF}.
This equation was solved numerically using a backward difference scheme
based on the method of characteristics \cite{Kim2006}.

For $h > 0$, we approximate the value of  $(n(t,a);\, t, a \ge 0)$ 
on the lattice $\{ (i h, k h);\, i \le T^*, k \le A^* \}$ by the
array $(u(k, i);\, k,i)$ defined as follows:
\begin{gather*}
    \forall k \le T^*-1,\; i \le A^*-1,\quad u(k+1, i+1) = u(k, i) \\
    \forall i \le A^*,\quad u(0, i) = x_0 g(i h) \\
    \forall k \le T^*-1,\quad u(k+1, 0) = h \sum_{i = 1}^{A^*} \tau(ih) u(k+1,i).
\end{gather*}
Note that individuals with age larger than $h A^*$ are discarded. This 
maximal age was chosen so that individuals with age greater than $h A^*$
have negligible infection rate, and belong to the dead or recovered 
compartment with high probability.

\subsection{Expected number of entrances in a state} \label{A:entrance}

Theorem~\ref{thm:singlePop} provides an expression for the number of
individuals $n_i(t)$ in compartment $i$ at time $t$. As we want to fit
``incidence'' data, that is, the number of individuals that \emph{enter}
a given compartment, we need to derive an expression for this quantity.
To this aim, for two compartments $i,j \in \mathcal{S}$, let us write $i
\preceq j$ if an individual must visit $i$ before it gets to $j$. For
instance, in the admission model we have $C \preceq H \preceq I \preceq
D$. Then, the number of entrances $e_i(t, t+s)$ in $i$ between $t$ and
$t+s$ is given by
\begin{align} \label{eq:likAdmission}
    e_i(t, t+s) = n_i(t+s) - n_i(t) + \sum_{\substack{j \succeq i\\j \ne i}}
    n_j(t+s) - n_j(t).
\end{align}
The last term in the previous sum corresponds to the number of
individuals who leave compartment $i$ during $[t, t+s]$. Expression
\eqref{eq:likAdmission} can be readily used to derive the expected
number of entrance in $i$ from $(n(t, a);\, t,a \ge 0)$ and 
$(p(a,j);\, a \ge 0)$. 

It is interesting to note that \eqref{eq:likAdmission} only depends on
the distribution of the entrance time $T_i$ of $(X(a);\, a \ge 0)$ in
$i$, defined as:
\[
    T_i \coloneqq \inf \{ a \ge 0 : X(a) = i \}
\]
with the convention $\inf \emptyset = \infty$. To see this, one can write
\begin{align*}
    e_i(t,t+s) &= \int_0^\infty \big(n(t+s,a) - n(t,a)\big) 
                \Big(\sum_{j \succeq i} p_j(a) \Big) \diff a \\
               &= \int_0^\infty \big(n(t+s,a) - n(t,a)\big) 
               \P( T_i \le a)  \diff a.
\end{align*}
From an inference perspective, this is quite convenient since computing
$e_i(t,t+s)$ only requires to infer the distribution of $T_i$.

\subsection{Likelihood computation} \label{A:likExpression}

The daily incidence and prevalence data for France between March 18 and May
11 were taken from~\cite{donnee_france}. The days during this time period
are indexed by $\{ 1, 2, \dots, t_{\max} \}$, where day $1$ is March 18 and
day $t_{\max}$ is May 11.

For $i \in \{H, U, D\}$, we denote by $e^{\mathrm{obs}}_i(t)$ the
reported number of admissions to hospital, ICU, or the number of deaths
on day $t$, respectively. Moreover, for $i \in \{H, U, R\}$, we denote by
$n_i^{\mathrm{obs}}(t)$ the reported number of occupied beds in hospital, 
ICU, or the number of discharged patients on day $t$, respectively.
Let us further denote by
\[
    \pi(k; \lambda) = e^{-\lambda} \frac{\lambda^k}{k!}
\]
the probability mass function of a Poisson distribution with parameter
$\lambda$. 

Then, the likelihood of a parameter set $\theta$ under the admission
model is given by
\[
    L_{\mathrm{ad}}(\theta) =  \prod_{i \in \{H, U, D\}}
    \prod_{k = 1}^{t_{\max}} \pi\big(e^{\mathrm{obs}}_i(t_k); e_i(t_{k-1}, t_k)\big).
\]
The expected number of entrances in state $i$, $e_i(t_{k-1}, t_k)$, is
computed using \eqref{eq:likAdmission} with the one-dimensional marginals
of the admission model and the numerical approximation of
\eqref{eq:McKVF} described in Section~\ref{A:numerical}.

For the occupancy model, the likelihood of the parameter set $\theta$ is
given by
\begin{equation} \label{eq:likOcModel}
    L_{\mathrm{oc}}(\theta) =  \prod_{i \in \{H, U, D\}} 
    \prod_{k = 1}^{t_{\max}} \pi\big(e^{\mathrm{obs}}_i(t_k); e_i(t_{k-1}, t_k)\big) 
                            \times \prod_{i \in \{H, U, R\}} 
    \prod_{k = 1}^{t_{\max}} \pi\big(n^{\mathrm{obs}}_i(t_k); n_i(t_k)\big).
\end{equation}
Again, the value of $e_i(t_{k-1}, t_k)$ is computed using
\eqref{eq:likAdmission} and that of $n_i(t_k)$ using
\eqref{eq:prediction}, but using the one-dimensional marginals of the
occupancy model. Note that under this more complex model, there are two
pathways to hospital (critical and severe infection), two pathways to
death (with or without hospital admission), and three pathways to
hospital discharge (fast discharge, slow discharge, or discharge of ICU
patients). The predicted values for the number of individuals in each of
these compartments is obtained by summing over all pathways leading to
the corresponding state. 

For both models, we looked for the parameter set $\theta$ maximizing the likelihood.  
It was obtained using the \texttt{minimize} function of the Python
\texttt{scipy.optimize} module, using a Nelder-Mead algorithm. We
selected as initial point of the optimization algorithm a set of
parameters that were close to the existing estimates in the literature,
or which seemed realistic if such estimates did not exist.

\subsection{Best fitting prevalence curves under admission model} \label{A:poorFit}

Recall the admission model from Section~\ref{SS:admission}. By adding two
parameters to the model, one for the mean time between hospital admission
and discharge, the other for the mean time between ICU admission and
discharge, we can derive an expression for the likelihood of the
prevalence and incidence time series under the admission model, similar
to \eqref{eq:likOcModel}. The best-fitting values for these two
parameters were obtained by maximizing the likelihood with all other
parameters values fixed to those estimated in Table~\ref{T:admission}.
The corresponding model is displayed in Figure~\ref{F:poorFit}.

\begin{figure}[h]
    \includegraphics[width=\textwidth]{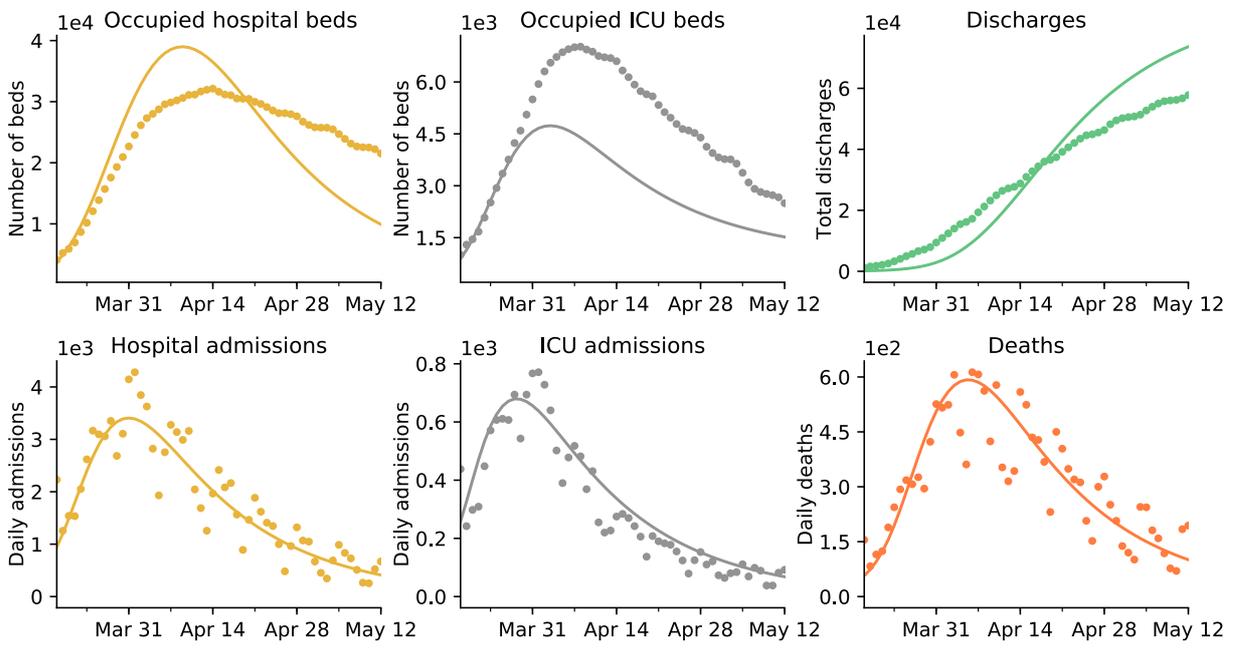}
    \caption{Best fit of the admission model for prevalence data.}
    \label{F:poorFit}
\end{figure}

\end{document}